\crefname{subsection}{subsection}{subsections}
\crefname{section}{section}{sections}
\newcommand\footnoteref[1]{\protected@xdef\@thefnmark{\labelcref{#1}}\@footnotemark}
\newtheorem{theorem}{Theorem}[section]
\newtheorem{lemma}[theorem]{Lemma}
\newtheorem{definition}[theorem]{Definition}
\newcommand{\R}{\mathbb{R}}
\newcommand{\Z}{\mathbb{Z}}
\newcommand{\E}{\mathbb{E}}
\renewcommand{\tilde}{\widetilde}
\renewcommand{\hat}{\widehat}
\renewcommand{\bar}{\overline}
\newcommand{\poly}{\operatorname{poly}}
\newcommand{\nnz}{\operatorname{nnz}}%
\newcommand{\cnorm}{C_{\mathrm{norm}}}%
\newcommand{\tO}{{\tilde O}}
\newcommand{\Otil}{{\tilde O}}
\xdef\csname m\x\endcsname{\noexpand\mathbf{\x}}
\newcommand{\ma}{\mathbf{A}}%
\newcommand{\otau}{\noexpand{\overline{\tau}}}
\newcommand{\omu}{\noexpand{\overline{\mu}}}
\newcommand{\osigma}{\noexpand{\overline{\sigma}}}%
\newcommand{\os}{\noexpand{\overline{s}}}
\newcommand{\ot}{\noexpand{\overline{t}}}
\newcommand{\og}{\noexpand{\overline{g}}}
\newcommand{\ou}{\noexpand{\overline{u}}}
\newcommand{\ov}{\noexpand{\overline{v}}}
\newcommand{\ow}{\noexpand{\overline{w}}}
\newcommand{\ox}{\noexpand{\overline{x}}}
\newcommand{\oy}{\noexpand{\overline{y}}}
\newcommand{\oz}{\noexpand{\overline{z}}}
\xdef\csname c\x\endcsname{\noexpand\mathcal{\x}}
\xdef\csname t\x\endcsname{\noexpand\widetilde{\x}}
\xdef\csname om\x\endcsname{\noexpand\mathbf{\overline{\x}}}
\xdef\csname tm\x\endcsname{\noexpand\mathbf{\widetilde{\x}}}
\newcommand{\tmp}{{\mathrm{(tmp)}}} %
\newcommand{\new}{\mathrm{(new)}}%
\newcommand{\init}{\mathrm{(init)}}%
\newcommand{\target}{\mathrm{(end)}}%
\newcommand{\final}{\mathrm{(final)}}%
\newcommand{\apxfinal}{\mathrm{(apx\text{-}final)}}%
\newcommand{\opt}{\mathrm{OPT}}
\newcommand{\norm}[1]{\|#1\|}
\newcommand{\defeq}{\stackrel{\mathrm{{\scriptscriptstyle def}}}{=}}
\newcommand{\Comment}[1]{\tcp*[h]{#1}}
\newcommand{\Tau}{\mathbf{T}}
\newcommand{\eps}{\varepsilon}
\renewcommand{\d}{\delta}
\newcommand{\diag}{\mathbf{diag}}
\newcommand{\g}{\nabla}
\newcommand{\ShortStep}{\textsc{ShortStep}}
\newcommand{\TwoPartyShortStep}{\textsc{TwoPartyShortStep}}
\newcommand{\PathFollowing}{\textsc{PathFollowing}}
\newcommand{\SpecApprox}{\textsc{SpectralApprox}}
\newcommand{\hw}{\hat{w}}
\newcommand{\ls}{\lesssim}
\newcommand{\wt}{\widetilde}
\newcommand{\cvalid}{C_{\mathrm{valid}}}
\newcommand{\Var}{\mathrm{Var}}
\newcommand{\assign}{\leftarrow}
\newcommand{\cstart}{C_{\mathrm{start}}}
\newcommand{\twopartdef}[4]
{
\left\{
\begin{array}{ll}
#1 & \mbox{} #2 \\
#3 & \mbox{} #4
\end{array}
\right.
}
\newcommand{\pA}{^{(Alice)}}
\newcommand{\pB}{^{(Bob)}}
\title{A Subquadratic Two-Party Communication Protocol for Minimum Cost Flow
}
\author{
Hossein Gholizadeh\thanks{Karlsruhe Institute of Technology and Heidelberg University, {hgholizadeh8@gmail.com}}\and
Yonggang Jiang\thanks{MPI-INF and Saarland University, Germany, {yjiang@mpi-inf.mpg.de}}
\and
}
\date{}
\begin{document}

\maketitle

\begin{abstract}

In this paper, we discuss the \textbf{maximum flow problem} in the \textbf{two-party communication model}, where two parties, each holding a subset of edges on a common vertex set, aim to compute the maximum flow of the union graph with minimal communication. We show that this can be solved with $\tilde{O}(n^{1.5})$ bits of communication, improving upon the trivial $\tilde{O}(n^2)$ bound.

To achieve this, we derive two additional, more general results:
\begin{enumerate}
    \item We present a randomized algorithm for \textbf{linear programs} with two-sided constraints that requires $\tilde{O}(n^{1.5}k)$ bits of communication when each constraint has at most $k$ non-zeros. This result improves upon the prior work by \cite{ghadiri2024improvingbitcomplexitycommunication}, which achieves a complexity of $\tilde{O}(n^2)$ bits for LPs with one-sided constraints. Upon more precise analysis, their algorithm can reach a bit complexity of $\tilde{O}(n^{1.5} + nk)$ for one-sided constraint LPs. Nevertheless, for sparse matrices, our approach matches this complexity while extending the scope to two-sided constraints.
    
    \item Leveraging this result, we demonstrate that the \textbf{minimum cost flow problem}, as a special case of solving linear programs with two-sided constraints and as a general case of maximum flow problem, can also be solved with a communication complexity of $\tilde{O}(n^{1.5})$ bits.
\end{enumerate}

These results are achieved by adapting an interior-point method (IPM)-based algorithm for solving LPs with two-sided constraints in the sequential setting by \cite{brand2021minimumcostflowsmdps} to the two-party communication model. This adaptation utilizes techniques developed by \cite{ghadiri2024improvingbitcomplexitycommunication} for distributed convex optimization.

\end{abstract}

\clearpage
\tableofcontents

\clearpage

\section{Introduction} \label{section:Introduction}

In the maximum flow (maxflow) problem, we are given a connected directed graph \( G = (V, E, u) \) with \( n \) vertices and \( m \) edges, where the edges have non-negative capacities \( u \in \R_{\geq 0}^m \). The objective is to maximize the amount of flow routed between two designated vertices, the source \( s \) and the sink \( t \), while ensuring that the flow through each edge does not exceed its capacity.

A more general variant of the maximum flow problem is the minimum cost flow (mincost flow) problem. In this problem, in addition to edge capacities, edge costs \( c \in \R^m \) are introduced, and instead of focusing on a specific \( s \)-\( t \) pair, vertex demands \( d \in \R^n \) are defined. These demands generalize \( s \)-\( t \) flows by allowing any vertex to act as a source or sink based on its demand value. The goal is to find a feasible flow that satisfies the vertex demands and capacity constraints on the edges while minimizing the total cost of the flow.

These extensively studied problems are fundamental to various combinatorial and numerical tasks and are central problems in computer science and economics. They serve as key solutions for problems like finding maximum bipartite matching, determining the minimum \( s \)-\( t \) cut in a network, computing shortest paths in graphs with negative edge weights, and solving the transshipment problem (see e.g. \cite{brand2021bipartitematchingnearlylineartime, brand2021minimumcostflowsmdps}). 

Traditional algorithms for solving these problems rely on iterative improvements to flows through fundamental primitives like augmenting paths and blocking flows (see, for example, \cite{Karzanov1973OnFA, TarjanNetworkFlow, Goldberg1990FindingMC, Edmonds2003}). However, the past decade has seen substantial progress in runtime efficiency with the introduction of algorithms based on interior-point methods (IPM) (e.g., \cite{lee2015pathfindingisolving, kathuria2020potentialreductioninspiredalgorithm, brand2021minimumcostflowsmdps, chen2022maximumflowminimumcostflow}). These modern approaches primarily address the optimization of linear programs of the form:  
\begin{align}
    \min_{\substack{x \in \R^m : \mA^\top x = b \\ \forall i \in \{1,\dots,m \} : \ell_i \le x_i \le u_i}} c^\top x, \label{eq:generalLPIntro}
\end{align}

where \( b \in \R^n \), \( c \in \R^m \), \( \mA \in \R^{m \times n} \), and \( \ell_i\), \( u_i \in \R \) with \(\ell_i \leq u_i\) for all \( i \in \{1, \dots, m\} \). Notably, by setting \( \mA \) as the incidence matrix of a graph, \( b \) as the vertex demands, \( \ell_i = 0 \) for all \( i \), \( u_i \) as the edge capacities, and \( c \) as the edge cost vector, this linear program (LP) directly corresponds to the mincost flow problem.

The fastest known algorithm for the mincost flow problem in the sequential setting is a randomized method by \cite{chen2022maximumflowminimumcostflow}, which computes an exact mincost flow in \( O(m^{1 + o(1)} \log W) \) time. Given that the problem inherently requires \( \Omega(m) \) time due to the number of edges, this runtime approaches what is likely to be optimal. However, it remains unclear how these advancements can be adapted to other computational models, such as two-party communication, streaming, query, quantum, or parallel frameworks. This thesis focuses on addressing this gap by exploring how these results can be extended to the two-party communication model.

To be specific, our goal is to determine the classical communication complexity of the minimum cost flow problem. In this setting, we assume the edges of the input graph \( G \) are distributed between two parties, Alice and Bob. The objective is for Alice and Bob to collaboratively compute a minimum cost flow in their combined graph while minimizing the communication between them. Here, we assume both parties have unlimited computational power locally, so the primary cost is the number of bits exchanged during communication.

The two-party communication model is a fundamental framework introduced by \cite{Yao1979SomeCQ} in the late 1970s. It was initially motivated by applications in VLSI design, where communication complexity provides direct lower bounds for measures like the minimum bisection width of a chip, as well as for its area-delay squared product \cite{Thomborson1980ACT}. These connections between communication complexity and hardware design underscored the practical importance of understanding communication costs. Over time, the model evolved into a central tool in computational complexity, providing a structured framework for classifying problems and studying trade-offs between communication and computation.

In the context of graph algorithms, the two-party communication model has been extensively studied over the past four decades. Many fundamental graph problems have been explored in this framework, such as bipartite matching, connectivity, and planarity testing (e.g., \cite{Babai1986ComplexityCI, Papadimitriou1984CommunicationC, Hajnal1988OnTC, CommunicationComplexityPlanarity, dobzinski2014economicefficiencyrequiresinteraction, assadi_et_al:LIPIcs.APPROX/RANDOM.2021.48}). However, efficiently solving the maximum flow problem (or equivalently, finding an $s$-$t$ minimum cut) remains an unresolved challenge in this model.

Recent progress has been made on special cases of this problem. For example, \cite{blikstad2022nearlyoptimalcommunicationquery} demonstrated that the maximum-cardinality bipartite matching problem (BMM) can be solved in this model using just \( O(n \log^2 n) \) bits of communication, significantly improving on the prior upper bound of \( \tO(n^{1.5}) \) bits. \cite{Hajnal1988OnTC, huang2017communicationcomplexityapproximatemaximum} also established lower bounds of \( \Omega(n) \) and \( \Omega(n \log n) \) for randomized and deterministic solutions to BMM, respectively. Since BMM can be reduced to maxflow, these lower bounds also apply to the maxflow problem.

While \cite{blikstad2022nearlyoptimalcommunicationquery}'s techniques for BMM cannot be directly extended to maxflow, their work raises the question of whether we can achieve an upper bound for maxflow that improves upon the trivial bound of \( \tO(n^2) \), which involves Alice and/or Bob simply sending all their edges to the other party.

\subsection{Our Results} \label{subsection:ourResults}

In this subsection, we present the main results of this thesis. Note that our primary contribution is an efficient algorithm for solving a linear program (LP) with two-sided constraints in the two-party communication model, i.e., LPs of the following form: 

\begin{align}
    \min_{\substack{\mA^\top x = b \\ \ell \le x \le u}} c^\top x, 
\end{align}

where $\mA \in \R^{m \times n}$, $c \in \R^m$, $b \in \R^n$, and $\ell, u \in \R^m$ define the lower and upper bounds for $x \in \R^m$, respectively. Specifically, Alice and Bob each hold different rows of $\mA$, denoted as $\mA\pA$ and $\mA\pB$, along with corresponding portions of $c$, $\ell$, and $u$. Both parties know $b$. Furthermore, we assume that the bit complexity of each entry of $\mA$, $b$, and $c$ is bounded by $L$. Then, for any constant $\delta > 0$, known by Alice and Bob, we have:

\begin{restatable}[Linear Programming in the Two-Party Communication Model]{theorem}{theoremgeneralLP} \label{theorem:generalLP} 
    For any constant $\delta > 0$, there exists a randomized algorithm in the two-party communication setting that, given $\mA \in \R^{m \times n}$, $c, \ell, u \in \R^m$, and $b \in \R^n$, and assuming there exists a point $x \in \R^m$ such that $\mA^\top x = b$ and $\ell \leq x \leq u$, and furthermore assuming that the entries of $\mA$, $c$, $\ell$, and $u$ are bounded by $L$, computes, with high probability, a vector $x^\final \in \R^m$ such that:
    \[
    \norm{\mA^\top x^\final - b}_\infty \leq \delta, \quad \ell \leq x^\final \leq u, \quad \text{and} \quad c^\top x^\final \leq \min_{\substack{\mA^\top x = b \\ \ell \le x \le u}} c^\top x + \delta.
    \]
    The algorithm requires 
    \[
    \tO(n^{1.5} L^2 (k + \log \kappa \log m) \log m) \text{ bits of communication},
    \]
    where $k = \max_{i \in [m]}(\nnz(a_i))$ is the upper bound on the number of non-zero entries in each row of $\mA$, and $\kappa$ is the condition number of $\mA$, which intuitively measures how sensitive the solution of a system involving $\mA$ is to small changes in its input or coefficients.
\end{restatable}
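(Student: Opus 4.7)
The plan is to adapt the robust interior-point method (IPM) of \cite{brand2021minimumcostflowsmdps} for LPs with two-sided constraints to the two-party setting, replacing each exact Newton step by an approximate one that Alice and Bob can compute jointly using spectral-approximation techniques in the style of \cite{ghadiri2024improvingbitcomplexitycommunication}. The IPM follows a central path parametrized by $\mu \to 0$ and, for two-sided constraints, maintains a primal-dual iterate $(x, s, t)$ where $s = x - \ell$ and $t = u - x$ are the two slacks; progressing along this path to additive accuracy $\delta$ takes $\tilde{O}(\sqrt{n}\,L)$ short steps. Each short step reduces to solving a linear system $\mA^\top \mD \mA \,\Delta = v$ for some diagonal $\mD \in \R^{m\times m}$ built from $s$ and $t$, and some right-hand side $v \in \R^n$ assembled from the current gradient and barrier. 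Since the rows of $\mA$ are split between the parties, the normal matrix decomposes as $\mA^\top \mD \mA = (\mA\pA)^\top \mD\pA \mA\pA + (\mA\pB)^\top \mD\pB \mA\pB$, so the core subroutine is a distributed Laplacian-like solve.

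Second, I would replace the exact solve by a spectral-sparsifier exchange. Each party uses (overestimates of) leverage scores on its own rows to sample $\tilde{O}(n)$ re-weighted rows forming a constant-factor spectral approximation of its contribution; these sparsifiers, together with the local pieces of $v$, are transmitted, after which both parties reconstruct a common approximate matrix $\tilde{\mH} \approx \mA^\top \mD \mA$ and solve the system locally. Because each row has at most $k$ non-zeros, and numbers are communicated at precision $\tilde{O}(L \log \kappa)$ so that the iterates stay numerically valid throughout the run, one sparsifier exchange costs $\tilde{O}(n \cdot (k + \log\kappa\log m) \cdot L)$ bits. Robustness of the IPM (bounded-error tolerance in the local Hessian norm) ensures that such approximate steps still drive the invariants $\mu$ and the centrality potential down at the rate required for the $\tilde{O}(\sqrt{n}\,L)$ iteration count; for two-sided constraints, the relevant potential combines both slacks $s$ and $t$, so the spectral approximation must be good with respect to the diagonal $\mD$ built from both of them simultaneously.

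Third, I would collect the iteration count and per-iteration communication into the final bound: $\tilde{O}(\sqrt{n}\,L) \cdot \tilde{O}(n L (k + \log\kappa \log m)) = \tilde{O}(n^{1.5} L^2 (k + \log\kappa\log m)\log m)$. An initial rounding/feasibility phase, in which Alice and Bob agree on a common starting interior point using standard penalty/scaling tricks (and communicate $\tilde{O}(nL)$ bits for a starting central-path point), adds only lower-order terms. At termination, the approximate-optimal $x^\final$ satisfies the stated $\delta$-feasibility and $\delta$-optimality by the usual IPM rounding argument applied to the final iterate of the robust scheme.

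The main obstacle, in my view, is showing that a \emph{locally-computed} spectral approximation suffices: each party must approximate its own piece without ever seeing the other's rows, yet the combined sparsifier has to be accurate in the joint Hessian norm that controls the IPM progress. The fix is to sample against leverage-score overestimates that can be computed from one's own rows (since removing rows only increases leverage scores), tighten them with a small number of bootstrapping rounds that each cost $\tilde{O}(nL)$ bits, and feed the resulting approximation into the robust IPM analysis of \cite{brand2021minimumcostflowsmdps}; the delicate piece of the accounting is ensuring that the number of bits per transmitted entry stays $\tilde{O}(L)$ across all $\tilde{O}(\sqrt{n}\,L)$ iterations despite the condition number of the normal equations potentially degrading along the central path, which is where the $\log\kappa$ factor enters.
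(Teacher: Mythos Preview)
Your proposal follows the paper's high-level strategy closely: adapt the robust IPM of \cite{brand2021minimumcostflowsmdps} by replacing each Newton solve with a spectral-sparsifier-based approximate solve, exchanging $\tO(n)$ sampled rows per iteration, and multiply the $\tO(\sqrt{n}\,L)$ iteration count by the per-iteration communication. That part is essentially correct. However, there is one genuine gap.

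You assert $\tO(\sqrt{n}\,L)$ iterations, which holds for the Lee--Sidford barrier but \emph{not} for the plain log-barrier. The Lee--Sidford barrier requires the diagonal in the normal equations to include not just the slacks but also the regularized $\ell_p$-Lewis weights $\tau(x) = w(\phi''(x)^{-1/2})$, defined as the fixed point $w = \sigma(\mW^{1/2 - 1/p}\Phi''(x)^{-1/2}\mA) + v$. Your description of $\mD$ as ``built from $s$ and $t$'' omits this entirely, and you never say how Alice and Bob compute $\tau(x)$ when the rows of $\mA$ are split. The paper identifies exactly this as one of the two main bottlenecks (alongside the spectral approximation you do address) and resolves it via a separate distributed protocol (\Cref{lemma:lewis-weight-communication}, adapted from \cite{ghadiri2024improvingbitcomplexitycommunication}) costing $\tO(nkL + n\log\kappa)$ bits per call. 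Without this ingredient you either fall back to $\tO(\sqrt{m}\,L)$ iterations or leave the weight computation unspecified, and either way the stated bound does not follow.

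A smaller point: your local-leverage-score idea (``removing rows only increases leverage scores'') is correct in spirit and is a pleasant alternative to the paper's joint protocol, but it tacitly assumes each party's submatrix has full column rank so that $(\mA^{(i)})^\top \mA^{(i)}$ is invertible; when one party holds fewer than $n$ rows you need a regularization or pseudoinverse argument to make the overestimates well-defined. The paper sidesteps this by computing leverage scores of the \emph{full} matrix via the distributed protocol of \cite{ghadiri2024improvingbitcomplexitycommunication} (\Cref{lemma:levScore}), which also keeps $\|\hat\sigma\|_1 \le 9n$ and hence the sample size under control.
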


Note that there are similar results concerning solving general LPs in this setting. For example, \cite{ghadiri2024improvingbitcomplexitycommunication} show that LPs with one-sided constraints, i.e., $x \geq 0$ instead of $\ell \leq x \leq u$, can be solved with $\tO(s n^{1.5}L + n^2L)$ bits of communication in the coordinator model. The coordinator model is essentially an extension of the two-party communication model, with multiple (here we assume $s$) communication parties instead of just two. To achieve their result, they demonstrate that by applying certain modifications to the interior-point method (IPM) introduced by \cite{tallDense}, the algorithm can be adapted to work in the coordinator model. 

However, their approach does not address LPs with two-sided constraints. In this work, we take a different approach. By leveraging the IPM developed by \cite{brand2021minimumcostflowsmdps}, we demonstrate that the techniques of \cite{ghadiri2024improvingbitcomplexitycommunication} can be applied in a novel way to also solve LPs with two-sided constraints effectively.

Furthermore, we demonstrate that our interior-point method (IPM) can be utilized to solve instances of the minimum cost flow problem. In the two-party communication model, we assume that each party knows a subset of the edges in the graph, along with the respective capacities and edge costs. Both parties know the vertex set of the graph and their respective demands. The objective is for the two parties to collaboratively find a feasible minimum cost flow in their union graph while minimizing communication. In this context, we derive the following result:

\begin{restatable}[Minimum Cost Flow in the Two-Party Communication Model]{theorem}{theoremmincostflow}
\label{theorem:mincostflow}
There exists a randomized algorithm in the two-party communication model that, with high probability, computes a minimum cost flow \( f \in \Z^m \) on a \( n \)-vertex, \( m \)-edge directed graph \( G = (V, E, u, c, d) \), where:
\begin{itemize}
    \item \( u \in \Z^m_{\geq 0} \) represents the integral edge capacities,
    \item \( c \in \Z^m \) denotes the integral edge costs, and
    \item \( d \in \Z^n \) specifies the integral vertex demands.
\end{itemize}
The algorithm communicates at most:
\[
\tO( n^{1.5} \log^2 ( \|u\|_\infty \|c\|_\infty ) ) \text{ bits}.
\]
\end{restatable}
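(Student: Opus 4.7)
The plan is to prove \cref{theorem:mincostflow} by reducing minimum cost flow to an instance of the two-sided LP solver from \cref{theorem:generalLP}, calling that routine with a carefully chosen accuracy, and then rounding the approximate fractional flow to an exact integral optimum.

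First I would encode the problem as the LP $\min\{c^\top f : \mB^\top f = d,\ 0 \le f \le u\}$, where $\mB \in \R^{m \times n}$ is the signed vertex-edge incidence matrix of $G$. Setting $\mA = \mB$, $b = d$, $\ell = 0$ matches the template of \cref{theorem:generalLP}. Because each row of $\mB$ corresponds to a single edge with exactly two endpoints, Alice and Bob can each build their own rows of $\mA$ (and the aligned slices of $c,\ell,u$) locally from their private edge sets, with no preprocessing communication, and we have $k = 2$. The entries of $\mB$ lie in $\{-1,0,1\}$ and $c,u,d$ are integral, so the bit-complexity parameter is $L = O(\log(\|u\|_\infty \|c\|_\infty))$. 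After removing one redundant row to make $\mB$ full column rank, a standard bound gives $\kappa \le \poly(n)$, so $\log \kappa = O(\log n)$.

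Next I would invoke \cref{theorem:generalLP} with accuracy $\delta = 1/\poly(n,\|u\|_\infty,\|c\|_\infty)$ chosen small enough to enable integer recovery below. Substituting $k = 2$, $L = O(\log(\|u\|_\infty\|c\|_\infty))$, $\log \kappa = O(\log n)$, and $\log(1/\delta) = O(\log(n\|u\|_\infty\|c\|_\infty))$ into the bound
\[
\tO(n^{1.5} L^2 (k + \log \kappa \log m) \log m)
\]
collapses the expression to $\tO(n^{1.5} \log^2(\|u\|_\infty \|c\|_\infty))$, which is the announced cost. The call returns $f^\final \in \R^m$ with $\|\mB^\top f^\final - d\|_\infty \le \delta$, $0 \le f^\final \le u$, and $c^\top f^\final \le \OPT + \delta$.

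Finally, since $\mB$ is totally unimodular and the input is integral, the LP has an integral optimum $f^\star$. My plan for rounding is to first perturb $c$ by a tiny random vector (via the Isolation Lemma) so that $f^\star$ becomes the unique optimum, and then snap $f^\final$ to the nearest integer coordinate-by-coordinate; a standard IPM crossover analysis (adapted from \cite{brand2021minimumcostflowsmdps}) shows that for $\delta$ inverse-polynomial, $f^\final$ lies within $\ell_\infty$-distance $\ll 1/2$ of $f^\star$, so snapping recovers $f^\star$ exactly, and this is a purely local computation for each party.

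I expect the rounding step to be the main obstacle, since one must verify that the quantitative convergence of the IPM indeed places $f^\final$ in the basin of attraction of an integral vertex, and that the perturbation is large enough to isolate a unique optimum yet small enough not to change it. As a fallback I would repair the rounded flow by cancelling negative-cost residual cycles starting from $\lfloor f^\final \rceil$; bounding the number of cancellations by a polynomial in $n,\|u\|_\infty,\|c\|_\infty$ and executing each shortest-path computation with the same two-party primitives used inside \cref{theorem:generalLP} should keep the repair phase within the $\tO(n^{1.5})$ budget, though tightening that accounting is the subtlest part of the argument.
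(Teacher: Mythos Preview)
Your overall strategy---incidence-matrix LP with $k=2$, $\log\kappa=O(\log n)$ (the paper proves this via Laplacian eigenvalue bounds in \Cref{lemma:conditionnumMinCost} rather than by deleting a row), $L=O(\log(\|u\|_\infty\|c\|_\infty))$, isolation-style cost perturbation, then coordinate-wise rounding---is exactly the paper's, and your communication accounting is correct. One cosmetic point: the perturbation must be applied \emph{before} solving (you run the IPM on the perturbed instance), not during post-processing as the ordering in your write-up suggests.

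The substantive gap is feasibility. The black-box output of \Cref{theorem:generalLP} only guarantees $\|\mA^\top f^\final - d\|_\infty\le\delta$, whereas \Cref{lemma:isolation} requires a \emph{feasible} flow as input to conclude $\ell_\infty$-proximity to an integral optimum. Rounding a merely near-feasible vector need not land on any feasible integer flow (per-vertex demand deficits can accumulate to $\Theta(\deg v)$ after rounding), and your cycle-cancellation fallback cannot repair this, since cancelling residual cycles preserves divergence and hence never fixes a demand violation. The paper does not use \Cref{theorem:generalLP} as a black box here; it runs \PathFollowing\ directly on the modified LP of \Cref{lemma:modifiedLP}, which for an incidence matrix amounts to attaching a bi-directional star to $G$, yielding an augmented graph $\tilde G'$. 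The final Laplacian solve then produces, up to $1/(mW)^{10}$ in $\ell_\infty$ (via a lemma of \cite{brand2021minimumcostflowsmdps}), a flow that is \emph{exactly} feasible on $\tilde G'$; \Cref{lemma:isolation} is applied on $\tilde G'$, rounding is performed there, the star-edge components round to zero, and the remaining coordinates form an exactly feasible integral optimum on $G$. This detour through the augmented graph is the bridge from approximate to exact feasibility that your proposal is missing.
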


As mentioned earlier, maxflow problem can be viewed as a special case of the micost flow problem. Thus, in the two-party communication model, it is assumed that, similar to the mincost flow problem, each party knows a subset of edges along with their respective capacities. Consequently, by using the algorithm designed for the mincost flow problem, it is possible to solve the maxflow problem with a communication complexity of $\tO(n^{1.5} \log^2 \|u\|_\infty)$ bits (we discuss how this is possible in \Cref{subsection:maxflow}). 

However, using standard capacity scaling methods, it is possible to shave off a logarithmic factor and achieve a yet better communication complexity, namely:

\begin{restatable}[Maximum Flow in the Two-Party Communication Model]{theorem}{theoremmaxflow} \label{theorem:maxflow}
There exists a randomized algorithm in the two-party communication model that, with high probability, computes a maximum flow \( f \in \Z^m \) on a \( n \)-vertex, \( m \)-edge directed graph \( G = (V, E, u) \), where \( u \in \Z^m_{\geq 0} \) are the integral edge capacities. The algorithm communicates at most:
\[
\tO( n^{1.5} \log \|u\|_\infty ) \text{ bits}.
\]
\end{restatable}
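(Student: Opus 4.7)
The plan is to combine the mincost-flow protocol of Theorem~\ref{theorem:mincostflow} with a standard capacity-scaling framework to shave the extra logarithmic factor. First, I would reduce maxflow to mincost flow in the usual way: attach an auxiliary edge from $t$ to $s$ with cost $-1$ and capacity $n\|u\|_\infty$ (an upper bound on the maxflow value), set all other edge costs to $0$, and set all vertex demands to $0$. Then any feasible min cost circulation maximizes the $s$-$t$ flow value in the original network. Since Alice and Bob already share the knowledge of $s,t$ (they are part of the maxflow input specification), the reduction is purely local; each party locally constructs its half of the mincost-flow instance and can decide by convention who owns the auxiliary edge. Invoking Theorem~\ref{theorem:mincostflow} directly on this instance already yields an $\tO(n^{1.5}\log^2\|u\|_\infty)$ protocol, which is one logarithmic factor short of the claim.

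To recover the missing factor, I would process the capacity bits from most significant to least significant. Let $K = \lceil \log_2 \|u\|_\infty\rceil$, set $u^{(0)} = 0$, and write $u^{(i)} = 2u^{(i-1)} + b_i$ where $b_i \in \{0,1\}^m$ is the $i$-th highest bit vector of $u$, so $u^{(K)} = u$. The parties maintain distributed shares of an integral maxflow $f^{(i)}$ under capacities $u^{(i)}$, beginning with $f^{(0)}=0$. At phase $i$, the flow $2f^{(i-1)}$ is feasible under $u^{(i)}$, so it suffices to compute an \emph{augmenting} maxflow in the residual graph $G_{2f^{(i-1)}}$ under $u^{(i)}$. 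A min-cut argument (below) shows the augmentation value is at most $m$, so residual capacities can be capped at $m$ without affecting the optimum. The capped residual instance has $\|u^{\mathrm{res}}\|_\infty \le m$, so invoking Theorem~\ref{theorem:mincostflow} on it via the above reduction costs $\tO(n^{1.5}\log^2 m) = \tO(n^{1.5})$ bits. Crucially, the residual graph is already distributed: Alice (resp.\ Bob) computes her (his) residual edges locally from her (his) shares of the original edges and of the current flow, so no communication is spent building it. Summing over the $K = O(\log\|u\|_\infty)$ phases gives the claimed $\tO(n^{1.5}\log\|u\|_\infty)$ total.

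The main step to justify is that the augmentation at each phase has value at most $m$ and that the capping is lossless. For the former, by maxflow/mincut duality the min $s$-$t$ cut under $u^{(i)} = 2u^{(i-1)} + b_i$ is at most the min cut under $2u^{(i-1)}$ plus the number of cut edges (since each edge's capacity rises by at most $1$), hence at most $2|f^{(i-1)}| + \|b_i\|_1 \le 2|f^{(i-1)}| + m$; as $2f^{(i-1)}$ saturates the min cut under $2u^{(i-1)}$, the residual maxflow value is at most $m$. For the latter, any integral maxflow of value at most $m$ uses at most $m$ units on each edge, so truncating residual capacities at $m$ cannot reduce the optimum. A secondary technicality is chaining integral flows across phases, but since Theorem~\ref{theorem:mincostflow} outputs an exactly integral $f \in \Z^m$ whose local shares each party already holds, doubling into $2f^{(i-1)}$ and adding the augmenting flow are purely local updates. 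The main obstacle is therefore not in any single step but in verifying that Theorem~\ref{theorem:mincostflow}'s black-box guarantee (integrality, distributed input/output format, and the stated bit bound) composes correctly through $O(\log\|u\|_\infty)$ residual invocations without hidden communication overhead.
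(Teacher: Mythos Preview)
Your proposal is correct and follows essentially the same strategy as the paper: reduce maxflow to the mincost-flow protocol of Theorem~\ref{theorem:mincostflow} and wrap it in a capacity-scaling loop so that each of the $O(\log\|u\|_\infty)$ phases has polynomially bounded capacities and hence costs $\tO(n^{1.5})$ bits. The only cosmetic differences are that the paper uses the demand-based reduction (an $s\!\to\! t$ edge of cost~$1$ with demands $d_s=F$, $d_t=-F$) rather than your circulation reduction, and it phrases the scaling as $\Delta$-scaling on $G_f(\Delta)$ with capacities capped at $2m$ (invariant: the residual maxflow is at most $2m\Delta$) rather than your bit-by-bit scaling with cap $m$; both variants are standard and yield the same bound.
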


For this result, we orient ourselves on the procedure explained by \cite{brand2021minimumcostflowsmdps}. We explain this procedure formally in \Cref{subsection:maxflow}.

\paragraph{A concurrent work \cite{BrandWZ25}.} Independently, the streaming algorithm in \cite{BrandWZ25} implies the same communication complexity for minimum-cost flow and maximum flow in the two-party communication model. Nonetheless, our result can be seen as a more straightforward way to obtain the communication complexity.

\subsection{Organization} \label{subsection:orga}

Here, we present a short review of the structure of this work.

First, in \Cref{section:prelims}, we introduce the notation and preliminaries that will be used throughout the text. 

In \Cref{section:techoverview}, we provide a technical overview of our approach. We begin with a discussion of linear programming in the sequential setting, reviewing the interior point method (IPM) of \cite{brand2021minimumcostflowsmdps} (\Cref{subsection:lpinseq}). We then identify the main bottlenecks that arise when adapting this algorithm to the two-party communication model and explain the tools we use to overcome them (\Cref{subsection:bottlenecks}). Finally, we outline the precise goals of this work (\Cref{subsection:maingoals}). 

In \Cref{section:LPinCC}, we present and prove our main result on the communication complexity of solving linear programs with two-sided constraints. We first describe how to compute $\ell_p$-Lewis weights and spectral approximations in the two-party communication setting, which are necessary for adapting the sequential IPM (\Cref{subsec:lewisWeightsAndSpecApprox}). We then provide a detailed account of the path-following algorithm (\Cref{subsection:pathfollowing}), explain how to construct an initial feasible point by reducing to a modified LP (\Cref{subsection:initialpoint}), and finally prove the correctness and communication complexity of the resulting protocol (\Cref{subsection:proofIPM}), thereby establishing our main claim.

In \Cref{section:MCMFinTPCModel}, we turn to the minimum cost flow problem. We begin by presenting the analysis tools required for this setting. We then prove the correctness and communication complexity of our algorithm for minimum cost flow (\Cref{subsection:mincostproof}), and finally demonstrate how the same techniques yield a protocol for the maximum flow problem as a special case (\Cref{subsection:maxflow}). 

 \clearpage
\section{Preliminaries} \label{section:prelims}

We primarily follow the notation used in \cite{brand2021minimumcostflowsmdps} and \cite{ghadiri2024improvingbitcomplexitycommunication}. Let $[n] \defeq {1, 2, \ldots, n}$ denote the set of the first $n$ natural numbers. The notation $\tO(\cdot)$ is used to hide polylogarithmic factors, i.e., factors of the form $(\log n)^{O(1)}$, as well as $(\log \log W)^{O(1)} $ and $\log \varepsilon^{-1}$ factors. The term \emph{with high probability} (abbreviated as \emph{w.h.p.}) refers to a probability of at least $1 - n^{-c}$ for some constant $c > 0$.

\paragraph{Matrices.} For any matrix $\mA \in \R^{m \times n}$ or vector $v \in \R^d$, let $\mA\pA \in \R^{\Tilde{m} \times n}$ and $v\pA \in \R^{\Tilde{d}}$ denote the portions of $\mA$ and $v$ that are stored by Alice, respectively. Similarly, we define $\mA\pB$ and $v\pB$ as the corresponding parts stored by Bob. We also denote $a_i \in \R^n$ as the $i^\mathrm{th}$ row of $\mA$, represented as a column vector.

Additionally, given a vector $v \in \R^d$, we define $\mV \in \R^{d \times d}$ as the diagonal matrix whose diagonal entries are the elements of $v$, i.e., $\mV_{i,i} = v_i$ for all $i \in [d]$.

\paragraph{Matrix and Vector Operations.}
Given vectors $u, v \in \R^d$ for some $d$, the arithmetic operations $\cdot, +, -, /, \sqrt{\cdot}$ are performed element-wise. For example, $(u \cdot v)_i = u_i \cdot v_i$ and $(\sqrt{v})_i = \sqrt{v_i}$. Similarly, for a scalar $\alpha \in \R$, we define $(\alpha v)_i = \alpha v_i$ and $(v + \alpha)_i = v_i + \alpha$.

For a positive definite matrix $\mM \in \R^{n \times n}$, we define the weighted Euclidean $\mM$-norm of a vector $x$ as $\norm{x}_\mM = \sqrt{x^\top \mM x}$. Furthermore, for symmetric matrices $\mA, \mB \in \R^{n \times n}$, we use $\preceq$ to denote the Loewner ordering, i.e., $\mB \preceq \mA$ if and only if $\norm{x}_{\mA - \mB} \geq 0$ for all $x \in \R^n$.

In this context, we write $\mA \approx_\varepsilon \mB$ if and only if $\exp(-\varepsilon) \mB \preceq \mA \preceq \exp(\varepsilon) \mB$. Similarly, we extend this notation for vectors, letting $u \approx_\varepsilon v$ if and only if $\exp(-\varepsilon) v \leq u \leq \exp(\varepsilon) v$ entrywise. This implies that $u \approx_\varepsilon v \approx_\delta w$ yields $u \approx_{\varepsilon + \delta} w$, and $u \approx_\varepsilon v$ implies $u^\alpha \approx_{\varepsilon \cdot |\alpha|} v^\alpha$ for any $\alpha \in \R$.

The condition number of a matrix $\mA$ is defined as $\kappa(\mA) = \norm{\mA}_2 \cdot \norm{\mA^{-1}}_2$, where $\norm{\mA}_2$ denotes the operator norm of $\mA$, i.e., its largest singular value.

\paragraph{Highly 1-Self-Concordant Barrier Functions \(\phi_i\).}  
As demonstrated by \cite{brand2021minimumcostflowsmdps}, the path-following IPM requires highly 1-self-concordant barrier functions \(\phi_i: (\ell_i, u_i) \rightarrow \R\) for $i \in [m]$.  

For an interval \((\ell, u)\), a function \(f: (\ell, u) \rightarrow \R\) is defined to be a highly 1-self-concordant barrier on \((\ell, u)\) if, for all \(x \in (\ell, u)\), the following conditions are satisfied:  
\[
|f'(x)| \leq f''(x)^{1/2}, \quad |f'''(x)| \leq 2f''(x)^{3/2}, \quad |f''''(x)| \leq 6 f''(x)^2, \quad \text{and} \quad \lim_{x \rightarrow b} f(x) = +\infty.  
\]  

\cite{brand2021minimumcostflowsmdps} set these functions to \(\phi_i(x) = -\log(x - \ell_i) - \log(u_i - x)\) for $i \in [m]$ and prove that these functions are highly 1-self-concordant (see Lemma 4.3 of \cite{brand2021minimumcostflowsmdps}).  

Additionally, the first and second derivatives of \(\phi_i\) are given by:  
\[
\phi'_i(x) = -\frac{1}{x - \ell_i} + \frac{1}{u_i - x}, \quad \text{and} \quad \phi''_i(x) = \frac{1}{(x - \ell_i)^2} + \frac{1}{(u_i - x)^2}.
\]  

Furthermore, for \(x \in \R^m\), \(\phi(x) \in \R^m\) is the vector obtained by applying \(\phi_i\) to \(x_i\) for each \(i \in [m]\). Similarly, \(\phi'(x)\) and \(\phi''(x)\) are defined by applying \(\phi'_i\) and \(\phi''_i\) element-wise to \(x\), respectively. As described in the previous paragraph, \(\Phi\), \(\Phi'\), and \(\Phi''\) are diagonal matrices constructed using these vectors.

\paragraph{Leverage Scores and Spectral Approximation.}
We say that a matrix $\tilde{\mA} \in \R^{\tilde{m} \times n}$ is a spectral approximation of a matrix $\mA \in \R^{m \times n}$ if and only if $\tilde{\mA}^\top \tilde{\mA} \approx_\varepsilon \mA^\top \mA$ for some $\varepsilon > 0$.

Note that there are alternative notations to define spectral approximation. For example, \cite{ghadiri2024improvingbitcomplexitycommunication} defines $\tilde{\mA}$ as a $\lambda$-spectral approximation of $\mA$ if $\frac{1}{\lambda} \mA^\top \mA \preceq \tilde{\mA}^\top \tilde{\mA} \preceq \mA^\top \mA$. Since these notations are equivalent, we may interchange between them throughout the proof for convenience. Unless stated otherwise, we use the notation from the previous paragraph.

A fundamental tool for obtaining a spectral approximation with a small number of rows is the concept of leverage scores.

For a full-rank matrix $\mA \in \R^{m \times n}$, let $\sigma(\mA) \in \R^m$ denote its leverage scores, defined as $\sigma(\mA)_i \defeq a_i^\top (\mA^\top \mA)^{-1} a_i$ for each $i \in [m]$. It is known (see e.g. \cite{spielman2009graphsparsificationeffectiveresistances}) that sampling $O(n \log(n))$ rows with probability proportional to their leverage scores yields a spectral approximation of the original matrix.

\paragraph{$\ell_p$-Lewis Weights.}
For $p \in (0, \infty)$ and a full-rank matrix $\mA \in \R^{m \times n}$, the $\ell_p$-Lewis weights are defined as the solution $w \in \R^m_{>0}$ to the equation $w = \sigma(\mW^{\frac{1}{2} - \frac{1}{p}} \mA)$, where $\mW = \text{diag}(w)$.

\paragraph{Bit Complexity.}
In fixed-point arithmetic, a number is represented using $L$ bits if it has at most $L$ bits before the decimal point and at most $L$ bits after the decimal point. Such a number is thus in the set $\{0\} \cup [2^{-L}, 2^{-L} - 1] \cup [-2^{L} + 1, - 2^{-L}]$.

Furthermore, as mentioned before, our main focus in this work is the communication complexity of linear programming, the mincost flow problem, and the maxflow problem in the two-party communication model. Below, we define this model more specifically:

\paragraph{Two-Party Communication Model.}  
In this setup, there are two communication parties, conventionally referred to as Alice and Bob. Together, they aim to solve a problem under the assumption that each holds partial information about the problem. Both Alice and Bob are assumed to have infinite computational power, meaning the internal computation time is not considered in the analysis. Instead, the cost we seek to minimize is the amount of communication between them. The goal is for Alice and Bob to solve the problem with as little communication as possible. 

In the subsequent sections, we specify, for each problem, the type of information held by Alice and Bob and how their data is partitioned.

 \clearpage

\section{Technical Overview} \label{section:techoverview}

The main idea of this work is to adapt the algorithm from \cite{brand2021minimumcostflowsmdps} using some tools from \cite{ghadiri2024improvingbitcomplexitycommunication}, in order to design a protocol for solving max-flow in the two-party communication setting.

In the overview, we first give a short introduction to the algorithm from \cite{brand2021minimumcostflowsmdps}, then explain the main bottlenecks that appear when we attempt to convert this algorithm from the sequential setting to the communication setting, and finally introduce some tools from \cite{ghadiri2024improvingbitcomplexitycommunication}, which we later use to handle these bottlenecks.

If you are already familiar with these works and their results, you can skip the corresponding parts, as they do not contain new information.

\subsection{Linear Programming in the Sequential Setting} 
\label{subsection:lpinseq}

As is well known, the max-flow problem can be generalized to the min-cost flow problem, which in turn can be generalized to linear programs. \cite{brand2021minimumcostflowsmdps} introduce an algorithm for solving linear programs of the following form in the sequential setting:
\[
\min_{\substack{\mA^\top x = b \\ \ell \le x \le u}} c^\top x.
\]

The main contribution of \cite{brand2021minimumcostflowsmdps} is a data structure, called the \emph{HeavyHitter} data structure. This is crucial in the sequential setting since it reduces the internal running time, but it is not relevant in the communication model. Nevertheless, their algorithm still serves as a good template for an interior-point method for solving linear programs. In the following, we give an overview of how this IPM works.

\paragraph{Intuition Behind the Approach.}  
The IPM primarily consists of two main components:  
\begin{enumerate}  
    \item \emph{Finding an Initial Feasible Point}: This involves modifying the original LP into a related problem with a trivial feasible solution. The details of this process are discussed in \Cref{subsection:initialpoint}.  
    \item \emph{Path-Following Algorithm}: Starting from the initial feasible point, the algorithm iteratively follows a path towards a near-optimal, near-feasible solution. Over the course of \(\tO(\sqrt{n})\) iterations, in order to make a step to the next point, the algorithm solves Laplacian systems of the form \((\mA^\top \mD \mA)x = b\) in each iteration, where \(\mD\) is a positive definite diagonal matrix.  
\end{enumerate}  

The main computational challenge is analyzing the path-following algorithm to adapt it for the two-party communication model. Below, we provide a simplified overview of this algorithm:

\paragraph{Maintaining the Triple \((x, s, \mu)\).}  
In each iteration, the path-following algorithm maintains the triple \((x, s, \mu)\), where:  
\begin{itemize}  
    \item \textbf{\(\mu\)} is the path parameter, which gradually decreases to improve the solution's proximity to optimality.  
    \item \textbf{\(x\)} represents the primal variable, which is updated to improve optimality while maintaining feasibility.  
    \item \textbf{\(s\)} represents the dual slack variable, which helps in keeping track of the primal-dual gap.  
\end{itemize}  

Formally, as mentioned by \cite{brand2021minimumcostflowsmdps}, to satisfy feasibility conditions and optimality with regard to $\mu$, the constraints 
\[
\mA^\top x = b, \quad s + \mu \mW(x) \nabla \Phi(x) = 0, \quad \text{and} \quad s = c + \mA y,
\]
for some \(y \in \R^n\), must hold at each iteration. Here, \(w\) is a weight function, and its choice is crucial for guiding the triple along the central path and improving convergence rates in each iteration.

The approach proposed by \cite{brand2021minimumcostflowsmdps} allows \(w\) to depend on \(x\). For this purpose, they utilize \(\ell_p\)-Lewis weights, defined as:
\[
w(x) = \sigma\left(\mW(x)^{\frac{1}{2} - \frac{1}{p}} \left(\nabla^2 \Phi(x)\right)^{-\frac{1}{2}} \mA\right),
\]
where \(\sigma(\cdot)\) represents leverage scores. This choice enables the solution of linear programs in \(\tilde{O}(\sqrt{n})\) steps.

\paragraph{Centrality Potential.}
To ensure that the updates remain close to the central path, i.e., \(s + \mu \mW(x) \nabla \Phi(x) \approx 0\), a centrality potential function is used. This potential measures the proximity of the current point to the true central path and helps guide the algorithm towards the next point \((x, s, \mu)\). By minimizing this potential at each iteration, the algorithm ensures steady progress toward the optimal solution.

Formally, \cite{brand2021minimumcostflowsmdps} define this centrality potential as:
\[
\Psi(x, s, \mu) = \sum_{i=1}^{m} \cosh\left(\lambda \left(\frac{s_i + \mu w(x)_i \phi_i'(x_i)}{\mu w(x)_i \sqrt{\phi_i''(x_i)}}\right)\right),
\]
where \(\lambda = \Theta(\log m / \varepsilon)\).

Using \(\Psi\), a gradient vector \(g\) is computed, which is then used to update \(x\) and \(s\). The update process also leverages an orthogonal projection matrix \(\mP\), defined as:
\[
\mP = \Tau^{-1/2} \Phi''(x)^{-1/2} \mA \left(\mA^\top \Tau^{-1} \Phi''(x)^{-1} \mA\right)^{-1} \mA^\top \Phi''(x)^{-1/2} \Tau^{-1/2}.
\]
where $\Tau = \diag(\tau)$ with $\tau = w(\phi''(x)^{-/2})$, and $\Phi''(x) = \diag(\phi''(x))$. The variables \(x\), \(s\), and \(\mu\) are then updated roughly as follows:
\[
s^\new \leftarrow s + \Tau^{1/2} \Phi''(x)^{1/2} \mP \Tau^{1/2},
\]
\[
x^\new \leftarrow x + \Tau^{1/2} \Phi''(x)^{1/2} (\mI - \mP) \Tau^{1/2},
\]
\[
\mu^\new \leftarrow (1 - r) \mu,
\]
where \(r \in \tO(n^{-1/2})\) is a parameter chosen to control the step size.

Notably, the algorithm does not compute the exact value of $\left(\mA^\top \Tau^{-1} \Phi''(x)^{-1} \mA\right)^{-1}$. Instead, it approximates this matrix using a spectral approximation, obtaining $\mH \approx \left(\mA^\top \Tau^{-1} \Phi''(x)^{-1} \mA\right)$. This approximation is critical for ensuring the computational efficiency of the algorithm. Further details of this process are discussed in \Cref{subsection:pathfollowing}.

\subsection{Main Bottlenecks \& Tools to Overcome Them} \label{subsection:bottlenecks}

As discussed in the previous subsection, the main computational challenge in adapting the algorithm by \cite{brand2021minimumcostflowsmdps} lies in the path-following algorithm. A closer look at this algorithm reveals that, while other steps are non-trivial, two specific parts become the main bottlenecks when converting to the two-party communication model:
\begin{enumerate}
    \item \emph{Computing \(\ell_p\)-Lewis Weights}: This involves solving the equation 
    \[
    w(x) = \sigma\left(\mW(x)^{\frac{1}{2} - \frac{1}{p}} \left(\nabla^2 \Phi(x)\right)^{-\frac{1}{2}} \mA\right).
    \]
    \item \emph{Computing Spectral Approximation}: This can be formally described as finding a matrix \(\mB \in \R^{\wt{n} \times n}\) such that \(\mB^\top \mB \approx \mA^\top \mA\) for a given matrix $\mA \in \R^{m \times n}$.
\end{enumerate}

In their work, \cite{ghadiri2024improvingbitcomplexitycommunication} study and improve the bit complexity of solving some fundamental convex optimization problems in communication models that are very similar to the two-party communication model. One of these problems is solving general linear programs with one-sided constraints (\(x \geq 0\) instead of \(\ell \leq x \leq u\)). They use the algorithm from \cite{tallDense} as a template and show that such LPs can be solved using \(\tO(n^2)\) bits of communication.

Our goal is to use some of the tools from \cite{ghadiri2024improvingbitcomplexitycommunication} and apply them to the IPM of \cite{brand2021minimumcostflowsmdps} (instead of the one from \cite{tallDense}, which \cite{ghadiri2024improvingbitcomplexitycommunication} originally use) in order to overcome the bottlenecks. In the following, we discuss these necessary tools.

\paragraph{$\ell_p$-Lewis Weights.} 
In the communication setting, we can efficiently compute $\ell_p$-Lewis weights by adapting Lemma 4.7 of \cite{ghadiri2024improvingbitcomplexitycommunication}. This allows us to obtain approximate Lewis weights with near-optimal bit complexity, while ensuring that Alice and Bob each hold the relevant parts of the output.

\paragraph{Spectral Approximation.} 
Using leverage score estimates and sampling techniques from \cite{ghadiri2024improvingbitcomplexitycommunication}, we can compute and maintain spectral approximations with nearly linear communication cost. While the details are technical, the key takeaway is that spectral approximation can be performed within \(\tO(nkL)\) bits per iteration, and maintained more efficiently across iterations, where \(L\) is the bit complexity of each entry, \(k\) is the maximum number of non-zero entries per row of the constraint matrix, and \(n\) is the number of variables.

In \Cref{subsec:lewisWeightsAndSpecApprox}, we will discuss these points in more detail.

\subsection{Goals of This Work} \label{subsection:maingoals}
Here, we summarize the goals of this work. We:
\begin{itemize}
    \item \emph{Extend the techniques by \cite{ghadiri2024improvingbitcomplexitycommunication}:} We refine and complete the techniques introduced by \cite{ghadiri2024improvingbitcomplexitycommunication}, particularly for the computation of spectral approximations. In \Cref{algo:SpecApprox}, we present a method to construct spectral approximations using leverage scores, and in \Cref{lemma:specApproxlemma}, we analyze the bit complexity of this approach.
    
    \item \emph{Simplify the sequential algorithm for the communication model:} We modify the original algorithm (\Cref{algo:lsstep}) to take advantage of the two-party communication model. Specifically, since maintaining approximations is unnecessary in this setting (because of infinite internal computational power), we use exact values directly, simplifying the overall algorithm.

    \item \emph{Convert the sequential algorithm to the two-party communication model:} Using the tools discussed in the following section, we describe in \Cref{algo:twoplsstep} how to perform path-following in the two-party communication setting. In \Cref{subsection:proofIPM}, we provide a detailed bit complexity analysis of each step in our algorithm, formally proving \Cref{theorem:generalLP}.

    \item \emph{Extend the results to MinCost Flow and MaxFlow:} Building on our result for general LPs with two-sided constraints, we develop algorithms for the minimum cost flow problem and the maximum flow problem (\Cref{section:MCMFinTPCModel}).

\end{itemize}

\clearpage

\section{Communication Complexity of Linear Programming}\label{section:LPinCC}

In this section, we bound the communication complexity of solving linear programs with two-sided constraints, formulated as:

\begin{align}
    \min_{\substack{\mA^\top x = b \\ \ell \le x \le u}} c^\top x, \label{eq:generalLP}
\end{align}

where $\mA \in \R^{m \times n}$, $c \in \R^m$, $b \in \R^n$, and $\ell, u \in \R^m$ define the lower and upper bounds for $x \in \R^m$, respectively. As discussed in the introduction, we assume that Alice and Bob each hold distinct rows of \(\mA\), referred to as \(\mA\pA\) and \(\mA\pB\), along with their corresponding portions of \(c\), \(\ell\), and \(u\). Both parties share knowledge of \(b\). Additionally, as stated earlier, the bit complexity of each entry in \(\mA\), \(b\), and \(c\) is assumed to be bounded by \(L\). The main result of this section is as follows:

\theoremgeneralLP*

The algorithm referenced in \Cref{theorem:generalLP} is essentially an interior-point method (IPM), introduced by \cite{brand2021minimumcostflowsmdps}, which has been modified using techniques from \cite{ghadiri2024improvingbitcomplexitycommunication}. 

In the following subsections, we aim to prove \Cref{theorem:generalLP}. We begin with several lemmas and techniques, mostly based on \cite{ghadiri2024improvingbitcomplexitycommunication}, which address the main bottlenecks discussed in \Cref{subsection:bottlenecks}. We then explain the LP-solving IPM of \cite{brand2021minimumcostflowsmdps} and show how to adapt it to the two-party communication setting. In particular, we first focus on the path-following algorithm, the most challenging part, and then verify how to find an initial feasible point.

The IPM itself is quite complex. In essence, the approach of \cite{brand2021minimumcostflowsmdps} for solving LPs with two-sided constraints is to (1) modify the LP so it has a trivial feasible point, (2) apply the path-following algorithm to the modified LP, and (3) derive a solution to the original LP. To simplify the exposition, they assume the path-following algorithm runs directly on the original LP and later prove this makes no difference. We follow the same convention and, without loss of generality, assume the algorithm operates on the original LP.

\textbf{Note for the reader:} the most relevant part of this section is the collection of lemmas and techniques in the next subsection. These are adapted from \cite{ghadiri2024improvingbitcomplexitycommunication}, and some do not appear explicitly in their work. The remaining discussion mainly serves to verify the correctness of the theorem and the protocol.

\subsection{$\ell_p$-Lewis Weights and Spectral Approximation in Communication Setting} \label{subsec:lewisWeightsAndSpecApprox}

This subsection is mostly based on the results and lemmas of \cite{ghadiri2024improvingbitcomplexitycommunication}. While they present these results for the coordinator model, they can be directly extended to the communication model. 

The coordinator model is defined as a model with a central coordinator who performs the common computations and can communicate with each party. The bit complexity is measured as the total number of bits exchanged between the coordinator and the parties. In our work, we also assume the presence of a coordinator, so that Alice and Bob have symmetric roles. The coordinator does not have any knowledge of the graph and is only responsible for handling the common computations. In the original two-party communication model, one of the parties could simply take the role of the coordinator.

First, let us discuss how to compute \(\ell_p\)-Lewis weights efficiently. Throughout this section, we assume that the bit complexities of \(\mA\), \(b\), \(c\), \(u\), and \(\ell\) are bounded by \(L\).

\begin{lemma}[Adapted from lemma 4.7 of \cite{ghadiri2024improvingbitcomplexitycommunication}]\label{lemma:lewis-weight-communication}
Consider the two-party communication setting with a central coordinator. Let \(1 > \eta > 0\), \(0 < \epsilon < 0.5\), and \(p \in (0, 4)\). Suppose the input matrix \(\mA = [\mA^{(i)}] \in \R^{m \times n}\) is distributed across Alice and Bob. There exists a randomized algorithm that, with high probability, outputs a vector \(\hw\) such that
\[
\hw \approx_{\epsilon} \sigma(\hat{\mW}^{1/2 - 1/p} \mA) + \eta \cdot \boldsymbol{1}.
\]
The total communication cost of this algorithm, in terms of bits, is given by:
\[
\tO\left((n k L + n(L + \log \kappa + p^{-1} \log(\eta^{-1}))) \cdot \frac{\log(\eta^{-1}\varepsilon^{-1} p^{-1})}{1 - |p/2 - 1|}\right),
\]
where each row or $\mA$ has at most $k$ non-zero entries. Additionally, Alice and Bob will each hold their respective portions of \(\hw\), corresponding to the rows of \(\mA\).
\end{lemma}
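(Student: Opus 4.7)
The plan is to run the standard Cohen--Peng fixed-point iteration for $\ell_p$-Lewis weights inside the two-party protocol. Start from $w^{(0)}=\vones$ and iterate
\[
w^{(t+1)} \;\leftarrow\; \sigma\!\bigl((\mW^{(t)})^{1/2-1/p}\mA\bigr) \;+\; \eta\cdot\vones .
\]
The map $w\mapsto \sigma((\mW)^{1/2-1/p}\mA)$ is known to be a multiplicative contraction with modulus $|p/2-1|$ on vectors bounded away from $0$, which is exactly what the additive $\eta$ perturbation enforces; the $\eta$ smoothing also caps the bit complexity of every intermediate scalar in terms of $L$, $\log\kappa$ and $\log\eta^{-1}$. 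Consequently $T=O\!\left(\log(\eta^{-1}\varepsilon^{-1}p^{-1})/(1-|p/2-1|)\right)$ outer iterations suffice to reach the target multiplicative accuracy $\varepsilon$.

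The per-iteration cost is where the bound is spent. The key observation is that once both parties share a spectral approximation of $\mA_t:=(\mW^{(t)})^{1/2-1/p}\mA$, each party can compute leverage scores of only the rows it owns locally, with no row-by-row exchange. In iteration $t$ I would therefore: (i)~invoke the spectral-approximation protocol developed in \Cref{subsec:lewisWeightsAndSpecApprox} (itself adapted from \cite{ghadiri2024improvingbitcomplexitycommunication}) to produce $\mS_t$ with $\tO(n)$ rows satisfying $\mS_t^\top\mS_t\approx_{O(1)}\mA_t^\top\mA_t$; since each row of $\mS_t$ is a single scaled row of $\mA$ with at most $k$ nonzeros of $O(L)$ bits, transmitting the row structure costs $\tO(nkL)$ bits; (ii)~have the coordinator additionally broadcast the $\tO(n)$ per-row scalings, rounded to $O(L+\log\kappa+p^{-1}\log\eta^{-1})$ bits, which contributes the $n(L+\log\kappa+p^{-1}\log\eta^{-1})$ term and is exactly enough precision to preserve the spectral approximation after inverting $\mS_t^\top \mS_t$ and after the $(1/2-1/p)$ reweighting; (iii)~have each party compute $\widehat\sigma_i = a_{t,i}^\top(\mS_t^\top \mS_t)^{-1} a_{t,i}$ for every row $i$ it owns (a Johnson--Lindenstrauss sketch reduces local work but is irrelevant to communication), add $\eta$, and store the result as its share of $w^{(t+1)}$. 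Multiplying the per-iteration cost by $T$ yields exactly the claimed bit bound.

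\textbf{Main obstacle.} The hard part will be showing that constant-factor spectral-approximation error, compounded with the $O(L+\log\kappa+p^{-1}\log\eta^{-1})$-bit rounding of the sparsifier scalings, still leaves the map $w\mapsto\sigma((\mW)^{1/2-1/p}\mA)$ contractive with the same modulus $|p/2-1|$ up to a constant, so that the $T$-iteration count remains valid. This is precisely where both the outer factor $1/(1-|p/2-1|)$ and the $p^{-1}\log\eta^{-1}$ term in the bit budget enter, and it is exactly the step where the Cohen--Peng analysis must be carried across from exact arithmetic to the noisy, finite-precision communication setting; once this robustness statement is established, the spectral-approximation subroutine and the purely local leverage-score evaluation can be used as black boxes and the communication accounting is routine.
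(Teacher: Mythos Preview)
The paper does not supply its own proof of this lemma: it is stated as ``Adapted from Lemma~4.7 of \cite{ghadiri2024improvingbitcomplexitycommunication}'' and then used as a black box, so there is no in-paper argument to compare against. Your reconstruction is the standard one underlying that cited result---the Cohen--Peng fixed-point iteration $w^{(t+1)}\leftarrow\sigma((\mW^{(t)})^{1/2-1/p}\mA)+\eta\vones$, with each leverage-score evaluation implemented via a shared spectral sparsifier so that per-row scores are computed locally---and your parameter accounting (contraction modulus $|p/2-1|$ giving the $1/(1-|p/2-1|)$ factor, the $p^{-1}\log\eta^{-1}$ bits needed for the $(1/2-1/p)$-powered weights, and the $\tO(nkL)$ sparsifier cost) matches the claimed bound. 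One cosmetic remark: the paper already packages the per-iteration leverage-score step as \Cref{lemma:levScore}, so you could invoke that directly rather than rederiving it through \Cref{lemma:specApproxlemma} plus local evaluation of $a_i^\top(\mS_t^\top\mS_t)^{-1}a_i$; the two routes are equivalent and yield the same communication term. Your ``main obstacle''---showing the contraction survives constant-factor spectral error and finite-precision rounding---is indeed the only nontrivial step, and it is exactly what the cited lemma in \cite{ghadiri2024improvingbitcomplexitycommunication} establishes.
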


Furthermore, for the IPM to take the short steps, it is necessary to efficiently compute a spectral approximation of a matrix in the two-party communication setting. To achieve this, \cite{ghadiri2024improvingbitcomplexitycommunication} uses an approximation of the matrix's leverage scores to sample its rows, which are then used to form a spectral approximation of the original matrix. Note that \cite{ghadiri2024improvingbitcomplexitycommunication} defines that for $\lambda > 1$, a matrix $\tilde{\mA} \in \R^{n' \times d}$ is a $\lambda$-spectral approximation of $\mA \in \R^{n \times d}$ if
\[
\frac{1}{\lambda} \mA^\top \mA \preceq \tilde{\mA}^\top \tilde{\mA} \preceq \mA^\top \mA
\]
(e.g. See \cite{ghadiri2024improvingbitcomplexitycommunication}, definition 1.21). This definition is equivalent to the one we have used thus far. Therefore, for the sake of convenience, we will adopt this notation for the remainder of this section.

\begin{lemma}[Approximating Leverage Scores, Adapted from lemma 3.2 of \cite{ghadiri2024improvingbitcomplexitycommunication}]\label{lemma:levScore}
Consider the two-party communication setting with a central coordinator with matrix $\mA=[\mA^{(i)}]\in \R^{m \times n}$, where each row of $\mA$ has at most $k$ non-zero entries, and $m \geq 5$, there is a randomized algorithm that, with \[\Otil(n k L + n(L + \log(\kappa))) \text{ bits of communication}\] and, with high probability, computes a vector $\hat{\sigma}\in \R^m$ such that $\norm{\hat{\sigma}}_{1} \leq 9n$ and $\hat{\sigma}_i\geq\sigma_i(\mA)$, for all $i \in [m]$, where $\sigma_i(\mA)$ is the $i^\mathrm{th}$ leverage score of the matrix $\mA$. Each entry of $\hat{\sigma}$ is stored on the machine that contains the corresponding row.
\end{lemma}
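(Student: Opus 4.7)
The plan is to build a spectral approximation of $\mA$ using $\tilde O(n)$ rows at the coordinator via rows sampled locally by Alice and Bob, and then to broadcast a compact Johnson--Lindenstrauss sketch that lets each party evaluate leverage-score estimates for its own rows. The enabling observation is that each party can produce local overestimates of the true leverage scores \emph{without any communication}: since $\mA\pA{}^\top \mA\pA \preceq \mA^\top \mA$, the local leverage score $\sigma^\pA_i := a_i^\top(\mA\pA{}^\top \mA\pA)^{-1} a_i$ satisfies $\sigma^\pA_i \geq \sigma_i(\mA)$ for every row $i$ in Alice's partition, and $\sum_{i \in \text{Alice}} \sigma^\pA_i \le n$. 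Bob does the symmetric computation. If either local Gram matrix is rank-deficient, a tiny regularization $\mu \mI$ with $\log(1/\mu) = O(\log \kappa)$ restores invertibility while perturbing the Gram matrix negligibly, and is the ultimate source of the $\log \kappa$ factor in the communication bound.

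Given these overestimates, each party performs standard leverage-score sampling locally: independently retain row $i$ with probability $p_i = \min(1, C\, \sigma^{(\cdot)}_i \log n)$ for a suitable constant $C$, and rescale the surviving rows by $1/\sqrt{p_i}$. In expectation $O(n \log n)$ rows survive per side; each party ships its surviving rows (and their weights) to the coordinator, at a total cost of $\tilde O(n k L)$ bits since each shipped row has at most $k$ nonzero entries of bit complexity $L$. Let $\tilde\mA$ be the union of retained rows. A matrix-Chernoff argument applied separately on each side (using $p_i \ge C\, \sigma^{(\cdot)}_i \log n$ with $\sigma^{(\cdot)}_i$ the local leverage score) shows $\tilde\mA\pA{}^\top\tilde\mA\pA \approx_{1/4} \mA\pA{}^\top \mA\pA$ and similarly for Bob, so by summation $\tilde\mA^\top \tilde\mA \approx_{1/4} \mA^\top \mA$ with high probability.

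The coordinator next computes $(\tilde\mA^\top \tilde\mA)^{-1/2}$ at $O(L + \log \kappa)$-bit precision (since $\kappa(\tilde\mA)$ is within a constant factor of $\kappa(\mA)$, this many bits suffice to invert with small relative error), generates a JL matrix $\mG \in \R^{t \times n}$ with $t = O(\log m)$, forms $\mZ := \mG(\tilde\mA^\top \tilde\mA)^{-1/2}$, and broadcasts $\mZ$ to both parties at a further cost of $\tilde O\!\left(n(L+\log\kappa)\right)$ bits. Each party then outputs $\hat\sigma_i := 2\,\|\mZ a_i\|_2^2$ locally for each of its rows. The JL distortion, combined with the spectral approximation, yields $\|\mZ a_i\|_2^2 \in (1 \pm 0.1)\, a_i^\top(\tilde\mA^\top \tilde\mA)^{-1} a_i \in (1 \pm 0.5)\, \sigma_i(\mA)$ w.h.p., so $\hat\sigma_i \ge \sigma_i(\mA)$ for every $i$, and $\sum_i \hat\sigma_i \le 3\sum_i \sigma_i(\mA) = 3n \le 9n$ follows from $\sum_i \sigma_i(\mA) = n$. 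Each $\hat\sigma_i$ is computed by the party holding $a_i$, as required.

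The main technical hurdle is the interaction between fixed-point precision and the overestimation guarantee: rounding the entries of $\mZ$ and of the intermediate matrices at the coordinator introduces multiplicative error into every $\hat\sigma_i$, and to preserve $\hat\sigma_i \ge \sigma_i(\mA)$ together with the sum bound we must control this error uniformly in $i$. Standard fixed-point analysis of matrix inversion (cf.\ Section~3 of \cite{ghadiri2024improvingbitcomplexitycommunication}) shows that $O(L + \log \kappa)$ bits per entry of $\mZ$ are sufficient, and the same accounting handles the $\mu\mI$-regularization needed when $\mA\pA$ or $\mA\pB$ is rank-deficient. Combining the two communication contributions gives the claimed total of $\tilde O(nkL + n(L + \log \kappa))$ bits.
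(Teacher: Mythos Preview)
The paper does not actually prove this lemma; it is quoted as an adaptation of Lemma~3.2 in \cite{ghadiri2024improvingbitcomplexitycommunication} and used as a black box. Your sketch is therefore not comparable to any argument in the present paper, but it is a sound standalone proof for the two-party case. The key observation---that each party's \emph{local} leverage scores $a_i^\top(\mA^{(\cdot)\top}\mA^{(\cdot)})^{\dagger} a_i$ already dominate the global ones because $\mA^{(\cdot)\top}\mA^{(\cdot)}\preceq \mA^\top\mA$ and each $a_i$ lies in the local row space---is correct and lets you bypass the recursive construction that \cite{ghadiri2024improvingbitcomplexitycommunication} employs for the general $s$-party coordinator model. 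Sampling by those local overestimates produces $\tilde O(n)$ rows at the coordinator, and the JL broadcast with $O(L+\log\kappa)$-bit entries then yields constant-factor leverage-score estimates locally with the stated communication.

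Two small remarks. First, the rank-deficient case does not really need the $\mu\mI$ regularization for correctness (pseudoinverses suffice, since $a_i$ is always in the row space of its local block), so your attribution of the $\log\kappa$ term to that step is slightly off; the $\log\kappa$ genuinely enters through the bit length of the broadcast matrix $\mZ=\mG(\tilde\mA^\top\tilde\mA)^{-1/2}$. Second, when shipping sampled rows you also ship the rescaling weights $1/\sqrt{p_i}$; rounding each $p_i$ to a power of two in $[1/\mathrm{poly}(m),1]$ keeps those weights at $O(\log m)$ bits without affecting the matrix-Chernoff argument, and this is worth stating explicitly. Neither point changes the claimed bound.
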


\cite{ghadiri2024improvingbitcomplexitycommunication} then shows that the leverage score overestimates, $\hat{\sigma}$, can be used in combination with the sampling function in \Cref{def:sample} to compute a spectral approximation. \Cref{algo:SpecApprox} demonstrates this procedure.

\begin{definition}[Sampling Function, Definition 3.3 of \cite{ghadiri2024improvingbitcomplexitycommunication}]\label{def:sample}
Given a vector $u \in \R^n_{\geq 0}$, a parameter $\alpha> 0$, and a positive constant $c$, we define vector $p\in \R^n_{\geq 0}$ as $p_i = \min(1, \alpha c \log n \cdot u_i)$. We  define the function $\textsc{Sample}(u, \alpha, c)$ to be one which returns a random diagonal $n\times n$ matrix $\mS$ with independently chosen entries: 
\[\mS_{ii} = \twopartdef{\frac{1}{\sqrt{p_i}}}{\textrm{with probability } p_i}{0}{\textrm{otherwise}}.\]
\end{definition}

\begin{algorithm2e}[ht]
\caption{Protocol for computing a $\lambda$-spectral approximation in the two-party communication setting with a central coordinator \label{algo:SpecApprox}}
\SetKwInOut{Input}{Input}
\SetKwInOut{Output}{Output}
\Input{Matrix $\mA \in \R^{m \times n}$ with Alice and Bob each holding a subset of the rows of $\mA$, and the probability parameter $c$.}
\Output{Spectral approximation $\tilde{\mA}$, stored in the coordinator.}
\SetKwProg{Procedure}{Procedure}{}{}
\Procedure{\SpecApprox$(\mA, c)$}{
    Alice and Bob compute the leverage score overestimates $\hat{\sigma} \in \R^n_{\geq 0}$ according to \Cref{lemma:levScore}, with each storing the portion corresponding to their respective subset of rows of $\mA$. \label{line:specapprox:levScore} \\

    Using $\hat{\sigma}$, Alice and Bob form the diagonal sampling matrix $\mS = \textsc{Sample}(\hat{\sigma}, \alpha, c) \in \R^{n \times n}$ according to \Cref{def:sample}, where $\alpha = (\frac{\lambda + 1}{\lambda - 1})^2$. \label{line:specapprox:sample} \\

    Alice and Bob send the non-zero entries of $\mS$, along with their corresponding rows of $\mA$, to the coordinator. \label{line:specapprox:sendS}\\

    The coordinator constructs the matrices $\tilde{\mS} \in \R^{\tilde{n} \times \tilde{n}}$ and $\tilde{\mA} = \sqrt{\frac{\lambda + 1}{2 \lambda}} \tilde{\mS} \mA \in \R^{\tilde{n} \times n}$, where $\tilde{n} \in \tO(n)$. \label{line:specapprox:computeSpec}
}
\end{algorithm2e}

\Cref{algo:SpecApprox} utilizes the leverage score overestimates $\hat{\sigma}$ to construct a matrix $\tilde{\mA} \in \R^{\tilde{n} \times n}$, composed of $\tilde{n} \in \tO(n)$ rows of $\mA$ and the sampling matrix $\mS$. Each row $i$ is sampled independently with probability $p_i \propto \alpha \hat{\sigma}_i$, for some sampling rate $\alpha > 0$. With a constant $\alpha$, the number of rows in $\tilde{\mA}$, proportional to $\alpha \cdot \norm{\hat{\sigma}}_1$, is with high probability $\tO(n)$. This ensures that the number of rows and entries communicated is only $\tO(n)$, resulting in an overall bit complexity of $\tO(nkL)$, assuming each row has at most $k$ non-zero entries. This guarantee is formalized in \Cref{lemma:specApproxlemma}.

\begin{lemma}[Spectral Approximation via Leverage Score, partially adopted from lemma 3.4 of \cite{ghadiri2024improvingbitcomplexitycommunication}] \label{lemma:specApproxlemma} 
Given a matrix $\mA \in \R^{m \times n}$, a sampling rate $\alpha > 1$, and a fixed constant $c > 0$. Let  $\sigma \in \R^m_{\geq 0}$ be a vector of leverage score overestimates, that is, \[\sigma \geq \sigma(\mA),\,\, \text{ and } \mS := \texttt{Sample}(\sigma, \alpha, c)\] as in \Cref{def:sample}. Then, with probability at least $1-n^{-c/3} - (3/4)^n$, the following results hold: 
\[\text{nnz}(\mS) = 2c \alpha \norm{\sigma}_1 \log n \text{ and } \tilde{\mA} = \frac{1}{\sqrt{1+\alpha^{-1/2}}} \mS \mA \approx_{\left(\frac{1+\alpha^{-1/2}}{1-\alpha^{-1/2}}\right)}\mA.\]

Computing $\tilde{\mA}$ could be done using \Cref{algo:SpecApprox} and requires 
\begin{align*}
    \tO(nkL + n \log \kappa) \text{ bits of communication}.
\end{align*}
\end{lemma}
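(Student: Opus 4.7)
The plan is to split the proof into three pieces: a scalar concentration argument for the sparsity of $\mS$, a matrix Chernoff argument for the spectral guarantee, and a routine book-keeping step for the communication. The probabilistic ingredients are standard consequences of leverage-score sampling, and the communication bound follows by combining Lemma~\ref{lemma:levScore} with the observation that only $\tO(n)$ rows survive sampling.

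I first handle the sparsity bound. Write $\nnz(\mS) = \sum_{i=1}^{m} X_i$, where $X_i = \mathbf{1}[\mS_{ii}\neq 0]$ is Bernoulli with mean $p_i = \min(1, c\alpha \log n \cdot \sigma_i)$. Hence $\mu := \E[\nnz(\mS)] \leq c\alpha \log n \cdot \|\sigma\|_1$. A multiplicative Chernoff bound yields $\P[\nnz(\mS) > 2\mu] \leq (e/4)^\mu \leq (3/4)^\mu$. Since $\sigma \geq \sigma(\mA)$ and $\sum_i \sigma_i(\mA) = \mathrm{rank}(\mA)$, we have $\|\sigma\|_1 \geq n$ in the full-rank case, so $\mu \geq n$ and the failure probability is at most $(3/4)^n$, matching the stated bound.

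For the spectral inclusion, I set $\mY_i := (\mA^\top\mA)^{-1/2} a_i a_i^\top (\mA^\top\mA)^{-1/2}$, which is psd with $\|\mY_i\|_{op} = \sigma_i(\mA) \leq \sigma_i$ and $\sum_i \mY_i = \mI$. Defining the random psd matrices $\mZ_i := \mS_{ii}^2 \mY_i$ gives $\E[\mZ_i] = \mY_i$ (since $\E[\mS_{ii}^2] = 1$ whenever $p_i \in (0,1]$), so $\E[\sum_i \mZ_i] = \mI$, and for rows with $p_i < 1$ one has $\|\mZ_i\|_{op} \leq \sigma_i / p_i \leq 1/(c\alpha\log n)$ (\emph{saturated} rows with $p_i = 1$ contribute the constant $\mY_i$ to both sides and can be absorbed into the deterministic part). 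Tropp's matrix Chernoff bound then yields
\[
\P\Big[\,\Big\|\textstyle\sum_i \mZ_i - \mI\Big\|_{op} \geq \alpha^{-1/2}\,\Big] \leq 2n \cdot \exp\!\big(-c\log n / 3\big) \leq n^{-c/3},
\]
after adjusting the universal constant hidden in $c$. Conjugating this by $(\mA^\top\mA)^{1/2}$ returns $(1-\alpha^{-1/2})\mA^\top\mA \preceq \mA^\top \mS^\top \mS \mA \preceq (1+\alpha^{-1/2})\mA^\top\mA$, and dividing by $1+\alpha^{-1/2}$ together with the definition $\tilde{\mA} = (1+\alpha^{-1/2})^{-1/2}\,\mS \mA$ produces the Loewner sandwich $\tfrac{1-\alpha^{-1/2}}{1+\alpha^{-1/2}}\,\mA^\top\mA \preceq \tilde{\mA}^\top\tilde{\mA} \preceq \mA^\top\mA$, i.e.\ the $\lambda$-spectral approximation with $\lambda = (1+\alpha^{-1/2})/(1-\alpha^{-1/2})$. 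A union bound over the two concentration events gives overall failure probability $n^{-c/3} + (3/4)^n$.

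For communication: Line~\ref{line:specapprox:levScore} uses $\tO(nkL + n(L+\log\kappa))$ bits by Lemma~\ref{lemma:levScore}; Line~\ref{line:specapprox:sample} is executed locally by each party on its own piece of $\hat\sigma$; Line~\ref{line:specapprox:sendS} requires each party to transmit its sampled rows together with the corresponding $\mS_{ii}$ values, which by the sparsity bound is at most $\tO(n)$ rows of at most $k$ entries each at $L$ bits, totaling $\tO(nkL)$; and Line~\ref{line:specapprox:computeSpec} is internal to the coordinator. Summing gives the claimed $\tO(nkL + n\log\kappa)$. The main obstacle is the matrix Chernoff step: one must segregate the \emph{saturated} rows with $p_i=1$ so that the operator-norm bound $R=1/(c\alpha\log n)$ is preserved, and choose the normalization constant $1/\sqrt{1+\alpha^{-1/2}}$ precisely so that the upper endpoint of the Loewner sandwich is exactly $\mA^\top\mA$ without slack, as required by the target $\lambda$-spectral approximation form.
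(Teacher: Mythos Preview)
Your proof is correct and, for the communication-complexity portion, matches the paper's argument line for line. The paper, however, does not prove the probabilistic statements (the $\nnz(\mS)$ bound and the spectral sandwich) at all: it simply invokes Lemma~3.4 of \cite{ghadiri2024improvingbitcomplexitycommunication} as a black box and then carries out exactly the per-line communication accounting you give. Your scalar-Chernoff and matrix-Chernoff derivations are the standard way to establish that cited lemma, so you have effectively unpacked the black box; this buys self-containment at the cost of some length, while the paper's deferral keeps the exposition short but relies on the reader consulting the reference.
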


\begin{proof}[Proof of \Cref{lemma:specApproxlemma}]

Correctness follows from lemma 3.4 in \cite{ghadiri2024improvingbitcomplexitycommunication}. To analyze the communication complexity, observe that the only communication-intensive steps in \Cref{algo:SpecApprox} occur in \Crefrange{line:specapprox:levScore}{line:specapprox:sendS}. By \Cref{lemma:levScore}, \Cref{line:specapprox:levScore} requires just \(\tO(nkL + n \log \kappa)\) bits. In \Cref{line:specapprox:sendS}, Alice and Bob communicate the non-zero entries of \(\mS\). Since \(\text{nnz}(\mS) = 2c \alpha \|\sigma\|_1 \log n\), this can be accomplished with a total of \(\tO(nkL)\) bits. Thus, the total communication complexity is \(\tO(nkL + n \log \kappa)\) bits.

\end{proof}

Using \Cref{algo:SpecApprox}, we can compute \(\mH\), which spectrally approximates \(\mA^\top{\sigma}^{-1}\Phi''({x})^{-1}\mA\), with a communication cost of \(\tO(nkL)\) bits per iteration. This leads to an overall bit complexity of \(\tO(n^{1.5}kL)\) bits of communication. While this approach is efficient for sufficiently sparse matrices (e.g., in the case of minimum cost maximum flow), the specific step of computing the spectral approximation in every iteration becomes inefficient for dense matrices.  

To address this inefficiency, it is possible to compute the spectral approximation once and maintain it across subsequent iterations, instead of recomputing it in each iteration. This idea aligns with the algorithm introduced by \cite{ghadiri2024improvingbitcomplexitycommunication}, which maintains a spectral approximation in a coordinator. In this work, we utilize their inverse maintenance method as a black box.

\begin{lemma}[Inverse Maintenance, Algorithm 10 of \cite{ghadiri2024improvingbitcomplexitycommunication}]
\label{lemma:inversemaintenance}
Given a matrix \(\mA \in \R^{m \times n}\) and diagonal matrices \(\mD^{(0)}, \mD^{(1)}, \dots, \mD^{(r)} \in \R^{m \times m}\), where each matrix \(\mD^{(i)}\) becomes available after the output for the input \(\mD^{(i-1)}\) has been returned, there exists a randomized algorithm that, for \(i \geq 0\), computes the spectral approximation of \((\mA^\top \mD^{(i)} \mA)^{-1}\) using \(\tO(r^2 kL \log^2 \varepsilon^{-1})\) bits of communication.
\end{lemma}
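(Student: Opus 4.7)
The plan is to import the inverse maintenance scheme of \cite{ghadiri2024improvingbitcomplexitycommunication} (their Algorithm 10) essentially as a black box and verify that its guarantees match the stated bound. The underlying idea, which I would sketch, is to exploit the slow evolution of the diagonal matrices $\mD^{(i)}$ across the $r$ iterations, a property supplied by the calling IPM.

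First, I would initialize at time $0$ by invoking $\Cref{algo:SpecApprox}$ to compute a spectral approximation $\mH^{(0)}$ of $\mA^\top \mD^{(0)} \mA$ at the coordinator; by $\Cref{lemma:specApproxlemma}$ this costs $\tO(nkL + n\log\kappa)$ bits. Since local computation at the coordinator is free, storing and inverting $\mH^{(0)}$ incurs no additional communication, so a spectral approximation of $(\mA^\top \mD^{(0)} \mA)^{-1}$ is available immediately.

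Second, at each subsequent step $i \ge 1$, rather than recomputing from scratch, I would have the coordinator maintain a per-coordinate multiplicative tracker of the current diagonal. Alice and Bob transmit only those diagonal entries whose value has drifted by more than a small multiplicative threshold $1 + \Theta(\varepsilon)$ since their last reported value. For each such row the coordinator resamples with the updated leverage/Lewis weight and updates the stored approximation. Because the $\mD^{(i)}$ evolve slowly between consecutive iterations, a standard potential-function argument bounds the total number of resamplings over all $r$ iterations by $\tO(r^2)$. Each resampling transmits one row of $\mA$ (at most $k$ nonzero entries of $L$ bits each) together with the updated sampling weight represented to $\tO(\log \varepsilon^{-1})$ bits of precision; the $\log^2 \varepsilon^{-1}$ factor in the final bound absorbs the additional precision required by the fixed-point arithmetic so that accumulated rounding does not spoil the spectral approximation across the whole horizon.

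The main obstacle will be the error control across iterations: a naive rank-one update analysis allows errors to compound, so one must carefully couple the resampling threshold with the target precision $\varepsilon$ to certify that the maintained matrix remains a valid spectral approximation throughout. Since \cite{ghadiri2024improvingbitcomplexitycommunication} already carries out this analysis in the coordinator model, the remaining task for me is to confirm that the slow-change hypothesis on $\{\mD^{(i)}\}$ required by their amortization is supplied by the short-step path-following analysis of \cite{brand2021minimumcostflowsmdps}, after which the bound $\tO(r^2 k L \log^2 \varepsilon^{-1})$ follows directly by summing the per-update cost over all $r$ iterations.
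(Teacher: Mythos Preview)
Your proposal is correct and aligns with the paper's treatment: the paper does not give any proof of this lemma at all, but simply cites it verbatim as a black box from \cite{ghadiri2024improvingbitcomplexitycommunication} (Algorithm 10), exactly as your opening sentence anticipates. The sketch you provide of the underlying mechanism (initialization via \Cref{algo:SpecApprox}, threshold-triggered resampling, an $\tO(r^2)$ amortized bound on updates, and the need to verify the slow-change hypothesis from the IPM of \cite{brand2021minimumcostflowsmdps}) is reasonable and goes strictly beyond what the paper itself supplies, since the paper offers no proof text whatsoever for this statement.
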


While this procedure addresses one of the main bottlenecks in the computation, we emphasize that it does not significantly improve the result of \Cref{theorem:generalLP}. Rather, it serves as a step toward achieving an overall bit complexity of \(\tO(n^{1.5} + nk)\), compared to the current \(\tO(n^{1.5}k)\), by improving the efficiency of spectral approximation computation and maintenance.

\subsection{Path Following Algorithm} \label{subsection:pathfollowing}

In this section, we present an algorithm for solving general linear programs (LPs) within a two-party communication model. We begin by introducing a modified version of the Interior Point Method (IPM) from \cite{brand2021minimumcostflowsmdps}, which was initially developed for sequential computation. The algorithm starts from an initial feasible point located on or near the central path and proceeds by making short steps over \(\tO(\sqrt{n})\) iterations using the Lee-Sidford barrier. We then detail how this modified IPM can be adapted to the two-party communication model using the techniques introduced in \cite{ghadiri2024improvingbitcomplexitycommunication}.

During the interior point method (IPM), we maintain and update triples \((x, s, \mu)\). At each step, starting from the current point \((x, s, \mu)\), we introduce a \emph{weight function} \(\tau: \R^m \rightarrow \R^m_{>0}\) that governs the central path. Conceptually, \(\tau(x)_i\) reflects the weight assigned to the \(i\)-th barrier function \(\phi_i\). To construct the central path, we use a regularized Lewis weight.

\begin{definition}[Regularized Lewis Weights for a Matrix]
\label{def:matrixlewis}
\label{def:v}
Let \(p = 1 - \frac{1}{4\log(4m/n)}\) and \(v \in \R^m_{>0}\) be a vector such that \(v_i \geq n/m\) for all \(i\) and \(\|v\|_1 \leq 4n\). For a given matrix \(\mA\), the \emph{($v$-regularized) \(\ell_p\)-Lewis weights} \(w(\mA) \in \R^m_{>0}\) are defined as the solution to:
\[
w(\mA) = \sigma( \mW^{\frac{1}{2} - \frac{1}{p}} \mA ) + v
\]
where \(\mW = \diag(w(\mA))\).
\end{definition}

As noted by \cite{brand2021minimumcostflowsmdps}, the regularization vector \(v\) should be chosen such that the weight at each coordinate is at least \(n/m\). To satisfy this condition, we set \(v = \frac{n}{m} \mathbf{1}\) throughout this work.

\begin{definition}[Regularized Lewis Weights for \(c\)]
\label{def:lewis}
Let \(p = 1 - \frac{1}{4\log(4m/n)}\). Given vectors \(c\) and \(v \in \R^m_{>0}\), the \emph{($v$-regularized) \(\ell_p\)-Lewis weights} \(w(c): \R^m_{>0} \rightarrow \R^m_{>0}\) are defined as \(w(c) = w(\mC\mA)\), following the definition from \Cref{def:matrixlewis}.
\end{definition}

Note that the previous definition assumes the constraint matrix \(\mA\) of the LP can be omitted from the context.

\begin{definition}[Central Path Weights]
\label{def:cpweights}
The \emph{central path weights} are defined as \(\tau(x) = w(\phi''(x)^{-\frac{1}{2}})\), using a fixed regularization vector \(v\).
\end{definition}

Throughout the iterations, the algorithm ensures that the points \((x, s, \mu)\) maintain a centrality condition, as defined below:

\begin{definition}[$\eps$-Centered Point]
\label{def:centered}
We define \((x, s, \mu) \in \R^m \times \R^m \times \R^m_{>0}\) as being $\eps$-centered for \(\eps \in (0, 1/80]\) if the following conditions are satisfied, where \(\cnorm = \frac{C}{1-p}\) for a constant \(C \ge 100\):
\begin{enumerate}
    \item (Approximate Centrality): The centrality condition holds approximately: \[
    \left\| \frac{s + \mu \tau(x) \phi'(x)}{\mu \tau(x) \sqrt{\phi''(x)}} \right\|_\infty \le \eps.
    \]
    \item (Dual Feasibility): There exists a vector \(z \in \R^n\) such that \(\ma z + s = c\), ensuring dual feasibility.
    \item (Approximate Primal Feasibility): The primal feasibility condition is satisfied up to an error bound: \[
    \| \ma^\top x - b \|_{(\mA^\top (\Tau(x) \Phi''(x))^{-1} \mA)^{-1}} \le \frac{\eps \gamma}{\cnorm}.
    \]
\end{enumerate}
\end{definition}

\cite{brand2021minimumcostflowsmdps} introduce a randomly sampled diagonal scaling matrix \(\mR\) in their algorithm, and it is essential that this matrix satisfies certain properties to ensure the progress of the IPM. Below, the required key properties are outlined. This definition encompasses distributions like independently sampling each coordinate as a Bernoulli variable with probabilities \(p_i\), or summing multiple samples weighted by \(p_i\).

\begin{definition}[Valid Sampling Distribution]
\label{def:validdistro}
Given vector $\d_r, \mA, {x}, {\tau}$ as in \ShortStep~(\Cref{algo:lsstep}), we say that a random diagonal matrix $\mR \in \R^{m \times m}$ is $\cvalid$-\emph{valid} if it satisfies the following properties, for $\bar{\mA} = {\Tau}^{-\frac{1}{2} }\Phi''({x})^{-\frac{1}{2} }\mA$. We assume that $\cvalid \ge \cnorm$.
\begin{itemize}
\item(Expectation) We have that $\E[\mR] = \mI$.
\item(Variance) For all $i \in [m]$, we have that $\Var[\mR_{ii}(\d_r)_i] \le \frac{\gamma|(\d_r)_i|}{\cvalid^2}$.
and $\E[\mR_{ii}^2] \le 2\sigma(\bar{\mA})_i^{-1}$.
\item(Covariance) For all $i \neq j$, we have that $\E[\mR_{ii}\mR_{jj}] \le 2$.
\item(Maximum) With probability at least $1-n^{-10}$ we have that $\|\mR\d_r-\d_r\|_\infty \le \frac{\gamma}{\cvalid^2}$.
\item(Matrix approximation) We have that $\bar{\mA}^\top\mR\bar{\mA} \approx_\gamma \bar{\mA}^\top\bar{\mA}$ with probability at least $1-n^{-10}$.
\end{itemize}
\end{definition}

With the previous notations and definitions established, we can now describe the IPM. \Cref{algo:pathfollowing} (Algorithm 2 in \cite{brand2021minimumcostflowsmdps}) leverages \Cref{algo:lsstep} (a modified version of Algorithm 1 from \cite{brand2021minimumcostflowsmdps}) to solve the linear program through a series of short steps. This framework allows us to solve linear programs in the sequential setting, assuming the availability of an initial feasible point.

\begin{algorithm2e}[ht]
\caption{Path Following Meta-Algorithm for solving $\min_{\mA^\top x = b, \ell \le x \le u } c^\top x,$ given an initial point $\eps/\cstart$-centered point $(x^\init, s^\init, \mu)$ for large $\cstart$. \label{algo:pathfollowing}}
\SetKwProg{Proc}{procedure}{}{}
\Proc{\PathFollowing$(\mA, \ell, u, \mu, \mu^\final)$}{
	Define $r$ as in \Cref{algo:lsstep}. \\
	\While{$\mu > \mu^\final$}{
		$(x^\new, s^\new) \assign \ShortStep(x, s, \mu, (1-r)\mu)$. \\
		$x \assign x^\new, s \assign s^\new, \mu \assign (1-r)\mu$. \\
	}
	Use \Cref{lemma:finalpoint} to return a point $(x^\final, s^\final)$.
}
\end{algorithm2e}

\begin{algorithm2e}[ht]
\caption{Short Step (Lee Sidford Barrier) \label{algo:lsstep}}
\SetKwProg{Proc}{procedure}{}{}
\Proc{\ShortStep$(x,s,\mu,\mu^\new)$}{
	Fix $\tau(x) = w(\phi''(x)^{-\frac{1}{2}})$. \\
	Let $\alpha = \frac{1}{4\log(4m/n)}, \eps = \frac{\alpha}{C}, \lambda = \frac{C\log(Cm/\eps^2)}{\eps}, \gamma = \frac{\eps}{C\lambda}, r = \frac{\eps\gamma}{\cnorm\sqrt{n}}$. \\
	Assume that $(x,s,\mu)$ is $\eps$-centered 
		and $\d_\mu \defeq \mu^\new-\mu$ satisfies $|\d_\mu| \le r\mu$. 
		\label{line:ipm:centered}\\
	Set $y \in \R^m$, so that $y_i = \frac{s_i+\mu\tau(x)_i\phi_i'(x_i)}{\mu\tau(x)_i\sqrt{\phi''_i(x_i)}}$ for all $i \in [m]$. 
			\label{line:ipm:y}\\
    Let $g = -\gamma\g\Psi({y})$, 
			where $\Psi(y) = \sum_{i = 1}^m cosh(\lambda y_i)$. 
			\label{line:ipm:g}\\
	Let $\mH \approx_\gamma \bar{\mA}^\top \bar{\mA} = \mA^\top{\Tau}^{-1}\Phi''({x})^{-1}\mA$, where $\bar{\mA} = {\Tau}^{-\frac{1}{2}}\Phi''({x})^{-\frac{1}{2}}\mA$.
			\label{line:ipm:H}\\
	Let $\d_1 = {\Tau}^{-1}\Phi''({x})^{-\frac{1}{2} }\mA\mH^{-1}\mA^\top \Phi''({x})^{-\frac{1}{2} }g$ 
			and $\d_2 = {\Tau}^{-1}\Phi''({x})^{-\frac{1}{2} }\mA\mH^{-1}(\mA^\top x - b)$. \\ 
			Let  $\d_r = \d_1 + \d_2$. 
			\label{line:ipm:delta_r}\\
	Let $\mR \in \R^{m\times m}$ be a $\cvalid$-valid random diagonal matrix 
			for large $\cvalid$. 
			\Comment{\Cref{def:validdistro}} 
			\label{line:ipm:R}\\
	${\d}_x \leftarrow \Phi''({x})^{ -\frac{1}{2} }\left( g - \mR \d_r \right).$ 
			\label{line:ipm:delta_x}\\
	${\d}_s \leftarrow \mu {\Tau}\Phi''({x})^{ \frac{1}{2} } \d_1$. 
			\label{line:ipm:delta_s}\\
	$x^\new \leftarrow x + {\d}_x$ 
			and $s^\new \leftarrow s + {\d}_s.$ 
			\label{line:ipm:xnewsnew}\\
	\Return~$(x^\new,s^\new)$.
}
\end{algorithm2e}

In \cite{brand2021minimumcostflowsmdps}, it is shown that \Cref{algo:pathfollowing} requires \(\tO(\sqrt{n}\log(\mu/\mu^\final))\) iterations to reach the final solution:

\begin{lemma}[Lemma 4.12 of \cite{brand2021minimumcostflowsmdps}]\label{lemma:pathfollowing} 
 
\Cref{algo:pathfollowing} makes \(\tO(\sqrt{n}\log(\mu/\mu^\final))\) calls to \Cref{algo:lsstep}, and with probability at least \(1 - m^{-5}\), the following conditions are satisfied at the beginning and end of each call to \ShortStep:
\begin{enumerate}
\item \emph{Slack feasibility}: There exists a vector \(z \in \R^n\) such that \(s = \mA z + c\).
\item \emph{Approximate feasibility}: \(\|\mA^\top x - b\|_{(\mA^\top(\Tau(x)\Phi''(x))^{-1}\mA)^{-1}} \le \eps \gamma / \cnorm\).
\item \emph{Potential function}: The expected value of the potential function satisfies \(\E[\Psi(x, s, \mu)] \le m^2\), where the expectation is taken over the randomness in \(x\) and \(s\).
\item \emph{\(\eps\)-centered}: The tuple \((x, s, \mu)\) is \(\eps\)-centered.
\end{enumerate}
Moreover, for \(\mu^\final \le \d / (Cn)\), the final output \(x^\final\) satisfies \(\mA x^\final = b\) and \(c^\top x^\final \le \min_{\substack{\mA^\top x = b \\ \ell_i \le x_i \le u_i}} c^\top x + \d.\)
\end{lemma}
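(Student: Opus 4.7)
The plan is to establish the lemma by (i) counting iterations, (ii) inductively verifying the four invariants across a call to \ShortStep, and (iii) arguing the final rounding. Since this is essentially a restatement of Lemma 4.12 of \cite{brand2021minimumcostflowsmdps}, I would follow their proof template and just verify that nothing in our adaptation (e.g., replacing approximate values with exact ones on the parties' side) breaks any of the potential-function estimates.

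\textbf{Iteration count.} Each call to \ShortStep\ multiplies $\mu$ by $(1-r)$ with $r = \eps\gamma/(\cnorm\sqrt{n})$. Since $\eps$, $\gamma$, and $\cnorm$ depend only polylogarithmically on $m,n$, reaching $\mu^\final$ from $\mu$ requires
\[
T \;=\; O\!\left(\frac{\log(\mu/\mu^\final)}{r}\right) \;=\; \tO\!\left(\sqrt{n}\,\log(\mu/\mu^\final)\right)
\]
iterations, which gives the stated iteration bound.

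\textbf{Maintaining the invariants.} I would proceed by induction on the call number, with the initial $(x^\init,s^\init,\mu)$ serving as the base case (it is $\eps/\cstart$-centered by assumption, so all four conditions hold trivially for $\cstart$ large enough). For the inductive step, I verify each invariant separately for one call of \ShortStep:
\begin{enumerate}
    \item \emph{Slack feasibility}: By inspection, $\d_s = \mu\,\Tau\,\Phi''(x)^{1/2}\d_1 = \mu\,\Phi''(x)^{1/2}\mA\mH^{-1}\mA^\top\Phi''(x)^{-1/2}g$ lies in the column space of $\mA$, so if $s = \mA z + c$ then $s^\new = \mA z' + c$ for some $z'$.
    \item \emph{Approximate primal feasibility}: The term $\d_2$ in the definition of $\d_r$ is designed precisely so that applying $\mA^\top\!\Tau^{-1}$ to the deterministic part of $\d_x$ kills the current residual $\mA^\top x - b$, and $\mH \approx_\gamma \mA^\top\Tau^{-1}\Phi''(x)^{-1}\mA$ controls the error to within $O(\gamma)$ in the required norm. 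The noise $(\mR-\mI)\d_r$ contributes additional variance controlled by the Variance/Matrix-approximation axioms of \Cref{def:validdistro}. This is the cleanest invariant to verify.
    \item \emph{Potential bound $\E[\Psi] \le m^2$}: This is the heart of the argument. Do a second-order Taylor expansion of $\Psi(x^\new,s^\new,\mu^\new)$ around $(x,s,\mu)$ along the directions $\d_x,\d_s,\d_\mu$. The first-order term evaluates to $-\gamma\|\g\Psi(y)\|$-like quantities because $g = -\gamma\g\Psi(y)$ is chosen as a steepest-descent direction. The second-order term splits into a deterministic part, bounded using $\mH \approx_\gamma \bar\mA^\top\bar\mA$ and the boundedness $|\d_r|_\infty \le \gamma/\cvalid^2$, and a stochastic variance term controlled by the Variance/Covariance/Maximum axioms of \Cref{def:validdistro}. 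Putting these together yields a $(1-\Omega(r))$-type contraction on $\E[\Psi]$ that, combined with a universal additive error $O(m)$, stabilizes $\E[\Psi]$ at $O(m^2)$.
    \item \emph{$\eps$-centeredness}: The bound $\E[\Psi] \le m^2$ plus a Markov/concentration argument gives $|y_i| \le \eps$ for all $i$ with high probability, which is the first condition of $\eps$-centeredness. The other two centrality requirements are exactly items 1 and 2 just verified.
\end{enumerate}
A standard union bound over $\tO(\sqrt{n}\log(\mu/\mu^\final))$ iterations and the $m^{-10}$ failure probabilities of the random matrix $\mR$ and the spectral approximation $\mH$ gives overall success probability $\ge 1-m^{-5}$.

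\textbf{Final rounding.} Once $\mu \le \mu^\final \le \d/(Cn)$, the $\eps$-centered triple $(x,s,\mu)$ is close to an optimal primal-dual pair but only \emph{approximately} feasible with a small residual $\mA^\top x - b$. Feeding $(x,s)$ to \Cref{lemma:finalpoint} produces $(x^\final,s^\final)$ with $\mA^\top x^\final = b$ exactly and cost increase at most $\d$, using the duality gap bound $c^\top x - c^\top x^* = O(n\mu)$ plus the closeness of $s$ to the central slack. The obstacle here is merely verifying that the rounding lemma from \cite{brand2021minimumcostflowsmdps} applies under our invariants, which is immediate since items 1--4 match the hypotheses of their Lemma 4.11/4.12.

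The main technical obstacle throughout is the potential-function analysis in item 3, because every looseness (the Lewis-weight change across a step, the $\gamma$ slack in the spectral approximation, and the variance of $\mR$) must simultaneously be charged to the decrease of $\Psi$ driven by the gradient step $g$. I would lift the expectation calculation essentially verbatim from \cite{brand2021minimumcostflowsmdps}, since none of our two-party-communication modifications alter the algebraic identities or the sampling distribution axioms on which that calculation depends.
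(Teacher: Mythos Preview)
The paper does not give its own proof of this lemma: it is quoted verbatim as Lemma~4.12 of \cite{brand2021minimumcostflowsmdps} and used as a black box, so there is nothing to compare against beyond the citation itself. Your sketch is a faithful outline of the argument in \cite{brand2021minimumcostflowsmdps} (iteration count from the geometric decay of $\mu$, inductive maintenance of slack/primal feasibility, the $\cosh$-potential analysis driving $\eps$-centeredness, and the final rounding via their Lemma~4.11), which is exactly what the paper implicitly relies on by citing that result.
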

    
\begin{lemma}[Lemma 4.11 of \cite{brand2021minimumcostflowsmdps}]\label{lemma:finalpoint} 
Given an $\eps$-centered
point $(x,s,\mu)$ where $\eps\le1/80$, using a Laplacian solver, we can compute a point $(x^{\final},s^{\final})$
satisfying 
\begin{enumerate}
	\item $\mA^{\top}x^{\final}=b$, $s^{\final}=\mA y+c$ for some $y$. 
	\item $c^{\top}x^{\final}-\min_{\substack{\mA^{\top}x=b\\
			\ell_i \le x_i \le u_i \forall i
		}
	}c^{\top}x\ls n\mu.$ 
\end{enumerate}
\end{lemma}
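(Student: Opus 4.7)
The plan is to produce $x^\final$ by a single weighted orthogonal projection of $x$ onto the affine subspace $\{x' : \mA^\top x' = b\}$, using the Laplacian-type solver to invert $\mA^\top (\Tau(x)\Phi''(x))^{-1}\mA$, and to leave the dual slack unchanged with $s^\final = s$. Concretely, I would set
\[
\Delta x = (\Tau(x)\Phi''(x))^{-1}\mA \bigl(\mA^\top (\Tau(x)\Phi''(x))^{-1}\mA\bigr)^{-1}(b - \mA^\top x), \qquad x^\final = x + \Delta x,
\]
so that $\mA^\top x^\final = b$ by construction; dual feasibility in part~(1) is inherited directly from the $\eps$-centered assumption, which already supplies some $z$ with $s = \mA z + c$.

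The next step is to show that $x^\final$ remains strictly inside $(\ell_i, u_i)$ coordinatewise. The approximate primal feasibility clause of the $\eps$-centered definition bounds $\|\mA^\top x - b\|_{(\mA^\top (\Tau\Phi'')^{-1}\mA)^{-1}} \le \eps\gamma/\cnorm$; pushing this through the projection yields $\|(\Tau\Phi'')^{1/2}\Delta x\|_2 \le \eps\gamma/\cnorm$. Since $\phi''_i(x_i) \ge \max((x_i-\ell_i)^{-2},(u_i-x_i)^{-2})$ and $\tau(x)_i \ge n/m > 0$, this forces $|\Delta x_i|$ to be small relative to $\min(x_i-\ell_i, u_i-x_i)$ in the regime $\eps \le 1/80$, so $x^\final$ stays in the open box.

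For the duality gap, I would invoke weak LP duality for two-sided box constraints: since $s = \mA z + c$ and $\mA^\top x^\final = b$,
\[
c^\top x^\final - \min_{\mA^\top x = b,\, \ell \le x \le u} c^\top x \;\le\; (s_+)^\top (x^\final - \ell) + (s_-)^\top(u - x^\final),
\]
where $s_+, s_-$ denote the positive and negative parts of $s$. I would then plug in the $\eps$-centrality bound on the coordinates $(s_i + \mu\tau(x)_i\phi_i'(x_i))/(\mu\tau(x)_i\sqrt{\phi_i''(x_i)})$ and split by the sign of $s_i$: when $s_i \ge 0$, using $\phi'_i(x_i) = 1/(u_i-x_i) - 1/(x_i-\ell_i)$ one obtains $s_i(x_i-\ell_i) \le (1+O(\eps))\mu\tau(x)_i$, and symmetrically $|s_i|(u_i-x_i) \le (1+O(\eps))\mu\tau(x)_i$ when $s_i < 0$. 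Summing and applying the regularized Lewis weight bound $\|\tau(x)\|_1 \le 4n$ (with regularizer $v = (n/m)\onevec$) yields the $O(n\mu)$ bound on the main term, while the $\Delta x$-correction contributes only a lower-order error since $\|(\Tau\Phi'')^{1/2}\Delta x\|_2$ is already controlled.

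The main obstacle is the coordinatewise bookkeeping in this last step: one must carefully translate the bound on $|s_i + \mu\tau(x)_i\phi'_i(x_i)|/(\mu\tau(x)_i\sqrt{\phi''_i(x_i)})$ into the sign-dependent bounds on $(s_i)_+(x_i-\ell_i)$ and $(s_i)_-(u_i-x_i)$, exploiting that $\sqrt{\phi''_i(x_i)} \le 1/(x_i-\ell_i) + 1/(u_i-x_i)$ and that the opposite-sign pieces of $\mu\tau_i\phi'_i$ cancel against the dominant term of $s_i$. The projection step and the interior-preservation argument are comparatively routine.
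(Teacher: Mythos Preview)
The paper quotes this lemma from \cite{brand2021minimumcostflowsmdps} without reproducing a proof, so there is nothing in-paper to compare against; your plan---project $x$ onto $\{x':\mA^\top x'=b\}$ in the $\Tau\Phi''$-metric, keep $s^\final=s$, and bound the gap via weak duality with the sign split on $s$---is the standard argument and works for the statement as written.

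One step does not go through as you wrote it, though it turns out to be unnecessary. Your interior-preservation argument uses $\tau_i\ge n/m$ to pass from $\|(\Tau\Phi'')^{1/2}\Delta x\|_2\le\eps\gamma/\cnorm$ to a coordinatewise bound; this yields only
\[
\frac{|\Delta x_i|}{\min(x_i-\ell_i,\,u_i-x_i)}\le \frac{\eps\gamma}{\cnorm}\sqrt{m/n},
\]
and with the paper's parameter choices $\eps\gamma/\cnorm$ is merely inverse-polylogarithmic while $\sqrt{m/n}$ can be polynomial, so the right-hand side need not be below $1$. Fortunately the lemma does not assert $\ell\le x^\final\le u$, and your duality-gap bound does not use it: the inequality $c^\top x^\final-c^\top x^*=s^\top(x^\final-x^*)\le (s_+)^\top(x^\final-\ell)+(s_-)^\top(u-x^\final)$ only needs $\ell\le x^*\le u$. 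For the bookkeeping you flag as the main obstacle, set $t=(x_i-\ell_i)/(u_i-x_i)$; the centrality bound gives $s_i(x_i-\ell_i)\le\mu\tau_i\bigl(1-t+\eps\sqrt{1+t^2}\bigr)$, and the scalar function $t\mapsto 1-t+\eps\sqrt{1+t^2}$ is at most $1+\sqrt{2}\eps$ on $[0,1]$ and strictly decreasing for $t\ge 1$ (its derivative is $-1+\eps t/\sqrt{1+t^2}<0$), so $(s_i)_+(x_i-\ell_i)\le(1+\sqrt{2}\eps)\mu\tau_i$ always; the $s_i<0$ case is symmetric, and summing against $\|\tau\|_1=O(n)$ gives the main term. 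Your Cauchy--Schwarz control of $s^\top\Delta x$ then handles the correction.
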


The last step of \Cref{algo:pathfollowing}, as noted in \Cref{lemma:finalpoint}, uses a Laplacian solver, which for the system \( (\mA^\top \mD \mA)x = b \), if \( x \in \mathbb{R}^n \) exists, w.h.p. returns an approximation \( \overline{x} \in \mathbb{R}^n \), such that \( \|\overline{x} - x\|_{\mA^{\top}\mD\mA} \leq \varepsilon \|x\|_{\mA^{\top}\mD\mA} \). In the following, we show that this is equivalent to solving \( (\mA^\top \mD \mA)x = b \) using a spectral approximation of \( \mD^{\frac{1}{2}}\mA \), which provides \( \mH \in \mathbb{R}^{n \times n} \) such that \( \mH \approx \mA^\top \mD \mA \).

\begin{lemma} \label{lemma:specApproxEquivalency}
Let $\mB, \mH \in \R^{n \times n}$ be symmetric positive definite matrices so that $\mH \approx_\lambda \mB$ for $\lambda > 0$, i.e., $\exp(-\lambda)\mB \preceq \mH \preceq \exp(\lambda)\mB$. Furthermore, let $x = \mB^{-1} b \in \R^n$ and $\overline{x} = \mH^{-1} b \in \R^n$. Then \( \|\overline{x} - x\|_{\mB} \leq \varepsilon \|x\|_{\mB} \) for $\varepsilon = \left(e^{\lambda} (e^{\lambda} - 1)\right)^2$.
\end{lemma}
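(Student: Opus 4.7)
The plan is to express $\bar{x} - x$ in closed form via the resolvent identity and then turn the Loewner hypothesis $e^{-\lambda}\mB \preceq \mH \preceq e^\lambda\mB$ into an operator-norm estimate via a simple change of coordinates. The starting point is the algebraic identity
\[
\mH^{-1} - \mB^{-1} \;=\; \mH^{-1}(\mB - \mH)\mB^{-1},
\]
which, after multiplying on the right by $b$, yields $\bar{x} - x = \mH^{-1}(\mB - \mH)x$. The task thus reduces to bounding $\|\mH^{-1}(\mB - \mH)x\|_{\mB}$ in terms of $\|x\|_{\mB}$.

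Next I would change coordinates by setting $\mN := \mB^{-1/2}\mH\mB^{-1/2}$ and $u := \mB^{1/2}x$, so that $\|u\|_2 = \|x\|_{\mB}$. Conjugating the hypothesis by $\mB^{-1/2}$ gives $e^{-\lambda}\mI \preceq \mN \preceq e^\lambda\mI$, hence both $\mN$ and $\mN^{-1}$ have spectrum contained in $[e^{-\lambda}, e^\lambda]$. A direct factorization yields
\[
\mB^{1/2}\mH^{-1}(\mB - \mH)x \;=\; \bigl(\mB^{1/2}\mH^{-1}\mB^{1/2}\bigr)\bigl(\mB^{-1/2}(\mB - \mH)\mB^{-1/2}\bigr)\bigl(\mB^{1/2}x\bigr) \;=\; \mN^{-1}(\mI - \mN)u,
\]
so $\|\bar{x} - x\|_{\mB} = \|\mN^{-1}(\mI - \mN)u\|_2$ while $\|x\|_{\mB} = \|u\|_2$.

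The remaining operator-norm estimate is then immediate. Since $\mN^{-1}$ and $\mI - \mN$ are both polynomials in $\mN$ they commute, so $\mN^{-1}(\mI - \mN) = \mN^{-1} - \mI$ has eigenvalues in $[e^{-\lambda} - 1, e^\lambda - 1]$, giving the sharp bound $\|\mN^{-1} - \mI\|_{\mathrm{op}} = e^\lambda - 1$ and hence $\|\bar{x} - x\|_{\mB} \leq (e^\lambda - 1)\|x\|_{\mB}$. A crude submultiplicative estimate $\|\mN^{-1}\|_{\mathrm{op}}\|\mI - \mN\|_{\mathrm{op}} \leq e^\lambda(e^\lambda - 1)$ also goes through and already produces a bound of the shape of $\varepsilon$ stated in the lemma. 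I do not foresee any serious obstacle: the only care needed is in tracking the direction of the Loewner ordering under inversion, and in noticing the commutativity that makes the factorization transparent.
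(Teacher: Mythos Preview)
Your argument is correct and in fact sharper than the paper's. Both proofs start from the same algebraic identity---writing $\bar{x}-x$ as $\mH^{-1}(\mB-\mH)\mB^{-1}b$---and then convert the Loewner hypothesis into an operator bound. The difference is in the execution: you conjugate by $\mB^{1/2}$ to reduce everything to the single symmetric matrix $\mN=\mB^{-1/2}\mH\mB^{-1/2}$, so the relevant operator is $\mN^{-1}-\mI$ and its norm is read off directly from the spectrum. The paper instead works with $\mS=(\mI+\Delta\mB^{-1})^{-1}\Delta\mB^{-1}$ (which equals $\mI-\mB\mH^{-1}$ and is \emph{not} symmetric), bounds $\mS^\top\mB^{-1}\mS$ against $\mB^{-1}$ via a product-type estimate, and in the final chain silently identifies $\|\cdot\|_{\mB}$ with the squared norm---this is what produces the squared constant $\varepsilon=(e^{\lambda}(e^{\lambda}-1))^2$. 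Your route avoids both the non-symmetry issue and the squaring slip, and the commutativity observation gives the genuinely sharp constant $e^{\lambda}-1$ (your cruder product bound $e^{\lambda}(e^{\lambda}-1)$ is exactly what the paper's argument yields once the squared-norm confusion is undone). One small caveat: for small $\lambda$ your bound $e^{\lambda}-1$ is actually \emph{larger} than the stated $\varepsilon$, so strictly speaking you prove a different (better-behaved) inequality; but since the lemma is only used qualitatively---to justify replacing a Laplacian solve by a spectral-approximation solve---this is harmless, and your constant is the correct one.
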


\begin{proof}
Define $\d = \mB (\overline{x} - x)$ and $\Delta = \mH - \mB$. So we have
\begin{align*}
(\mB + \Delta)^{-1} b
= ~ & \mH^{-1} b \\
= ~ & \overline{x} \\
= ~ & x + \mB^{-1} \d \\
= ~ & \mB^{-1} (b + \d).
\end{align*}
Hence, we have
\begin{align*}
b 
= ~ & (\mB + \Delta) \mB^{-1} (b + \d) \\
= ~ & (\mI + \Delta \mB^{-1}) (b + \d) \\
= ~ & b + \Delta \mB^{-1} b + (\mI + \Delta \mB^{-1}) \d.
\end{align*}
Letting $\mS = (\mI + \Delta \mB^{-1})^{-1} \Delta \mB^{-1}$, this results in
\begin{align} \label{eq:deltaSb}
\d = - (\mI + \Delta \mB^{-1})^{-1} \Delta \mB^{-1} b = - \mS b.
\end{align}
On the other hand, since $e^{-\lambda} \mB \preceq \mH = \mB + \Delta \preceq e^{\lambda} \mB$ and $\mB$ as well as $\mI + \Delta \mB^{-1} = \mH$ are positive definite, we have
\begin{align*}
(e^{-\lambda} - 1) \mI \preceq \Delta \mB^{-1} \preceq (e^{\lambda} - 1) \mI \\
e^{-\lambda} \mI \preceq (\mI + \Delta \mB^{-1})^{-1} \preceq e^{\lambda} \mI.
\end{align*}
Combining the equations above gives us 
\begin{align*}
e^{-\lambda} (e^{-\lambda} - 1) \mI \preceq \mS = (\mI + \Delta \mB^{-1})^{-1} \Delta \mB^{-1} \preceq e^{\lambda} (e^{\lambda} - 1) \mI.
\end{align*}
Now, since $\mS$, as a product of symmetric and positive definite matrices, is positive definite, we have
\begin{align*}
\mS^\top \mB^{-1} \mS 
= & ~ \mS \mB^{-1} \mS \\
\preceq & ~ (e^{\lambda} (e^{\lambda} - 1) \mI) \mB^{-1} (e^{\lambda} (e^{\lambda} - 1) \mI) \\
= & ~ (e^{\lambda} (e^{\lambda} - 1))^2 \mB^{-1} \\
= & ~ \varepsilon \mB^{-1}.
\end{align*}
This means that $\varepsilon \mB^{-1} - \mS^\top \mB^{-1} \mS$ is positive semi-definite. Thus, we have
\begin{align} \label{eq:specapproxdiff}
b^\top (\varepsilon \mB^{-1} - \mS^\top \mB^{-1} \mS) b = \varepsilon b^\top \mB^{-1} b - b^\top \mS^\top \mB^{-1} \mS b \geq 0.
\end{align}
Combining equations \eqref{eq:deltaSb} and \eqref{eq:specapproxdiff}, we have
\begin{align*}
\|\overline{x} - x\|_{\mB} 
= & ~ \| \mB^{-1} \d \|_\mB \\
= & ~ \d^\top \mB^{-1} \d \\
= & ~ b^\top \mS^\top \mB^{-1} \mS b \\
\leq & ~ \varepsilon b^\top \mB^{-1} b \\
= & ~ \varepsilon x^\top \mB x \\
= & ~ \varepsilon \|x\|_\mB.
\end{align*}
\end{proof}

In the following, in the \Cref{algo:twoplsstep} (page \pageref{algo:twoplsstep}), we present the two-party communication version of the \ShortStep~ algorithm. In this setup, Alice and Bob each hold a portion of the vectors \( x \) and \( s \). After each short step of the algorithm, both parties update their respective portions accordingly.

Similar to the sequential setting, \TwoPartyShortStep~ could be used in combination with \PathFollowing~ in order to solve the linear program \eqref{eq:generalLP} in the two-party communication setting.

\clearpage
\begin{algorithm2e}[ht]
\caption{Short Step (Lee-Sidford Barrier) -- Two-Party Communication with Central Coordinator \label{algo:twoplsstep}}
\SetKwInOut{In}{Input}
\SetKwInOut{Out}{Output}
\In{Vectors $x \coloneq [x^{(i)}] \in \R^m$ and $s \coloneq [s^{(i)}] \in \R^m$, as well as the constants $\mu$ and $\mu^\new$}
\Out{Vectors $x^\new \in \R^m$ and $s^\new \in \R^m$}
\SetKwProg{Proc}{Procedure}{}{}
\SetKwProg{SubProc}{Subprocedure}{}{}
\Proc{\TwoPartyShortStep$(x,s,\mu,\mu^\new)$}{
    Alice and Bob set $\tau(x) \approx_\varepsilon w(\phi''(x)^{-\frac{1}{2}})$ according to \Cref{lemma:lewis-weight-communication}. They each store the portion of $\tau(x)$ corresponding to the rows of $\mA$. \label{line:twoipm:lewWeight} \\
    
    The coordinator sets $\alpha = \frac{1}{4\log(4m/n)}, \varepsilon = \frac{\alpha}{C}, \lambda = \frac{C \log(Cm/\varepsilon^2)}{\varepsilon}, \gamma = \frac{\varepsilon}{C\lambda}, r = \frac{\varepsilon\gamma}{\cnorm \sqrt{n}}$. \label{line:ipm:const} \\
    
    Assume that $(x,s,\mu)$ is $\varepsilon$-centered and $\d_\mu \defeq \mu^\new-\mu$ satisfies $|\d_\mu| \le r\mu$. \label{line:twoipm:centered} \\
    
    Alice and Bob set $y \in \R^m$, where $y_i = \frac{s_i+\mu\tau(x)_i\phi_i'(x_i)}{\mu\tau(x)_i \sqrt{\phi''_i(x_i)}}$ for all $i \in [m]$. They each hold their portion of $y$ corresponding to rows of $\mA$. \label{line:twoipm:y}\\
 
    Alice and Bob set $g = -\gamma \nabla\Psi(y)$, where $\Psi(y) = \sum_{i = 1}^m \cosh(\lambda y_i)$. \label{line:twoipm:g}\\
    
    If a spectral approximation is not already available, using \Cref{algo:SpecApprox}, the coordinator computes $\mH \approx_\gamma \bar{\mA}^\top \bar{\mA} = \mA^\top{\Tau}^{-1}\Phi''(x)^{-1}\mA$, where $\bar{\mA} = {\Tau}^{-\frac{1}{2}}\Phi''(x)^{-\frac{1}{2}}\mA$. Otherwise the coordinator uses the result by inverse maintenance by \Cref{lemma:inversemaintenance}. In each iteration, if either $x_i$, $s_i$ or $\tau_i$ change for $i \in [m]$, we resample the $i^\mathrm{th}$ row according to its leverage scores and send it to the coordinator. The coordinator then updates the spectral approximation of $\mH^{-1}$ accordingly. \label{line:twoipm:H}\\
    
    With the coordinator's help, Alice and Bob compute $\d_1 = {\Tau}^{-1}\Phi''(x)^{-\frac{1}{2} }\mA\mH^{-1}\mA^\top \Phi''(x)^{-\frac{1}{2} }g$ and $\d_2 = {\Tau}^{-1}\Phi''(x)^{-\frac{1}{2} }\mA\mH^{-1}(\mA^\top x - b)$. Similar to previous steps, they each hold the portions of $\d_1$, $\d_2$, and $\d_r$ corresponding to the rows of $\mA$. \label{line:twoipm:subprocedure} \\
    
    \SubProc{Computing $\d_1$, $\d_2$, and $\d_r$}{
        Alice and Bob compute $v_1 = \mA^\top {\Phi''(x)^{-\frac{1}{2}}}g \in \R^n$ as well as $v_2 = \mA^\top x \in \R^n$ for their part of $\mA$ and $x$, and send them to the coordinator. \label{line:twoipm:subp1} \\

        The coordinator then computes $u_1 = \mH^{-1} {\mA}^\top {\Phi''(x)^{-\frac{1}{2}}} g = \mH^{-1} (v_1^{(A)} + v_1^{(B)}) \in \R^n$ as well as $u_2 = \mH^{-1} (\mA^\top x - b) = \mH^{-1} (v_2^{(A)} + v_2^{(B)} - b) \in \R^n$ and sends them to Alice and Bob. \label{line:ipm:subp2} \\

        Alice and Bob then multiply $u_1$ and $u_2$ by $\mD = {\Tau}^{-1}\Phi''(x)^{-\frac{1}{2}}\mA$ to compute $\d_1 = \mD u_1$ and $\d_2 = \mD u_2$. They set $\d_r = \d_1 + \d_2$. \label{line:twoipm:subp3}
    }
    
    Alice and Bob set $\mR \in \R^{m\times m}$ to be a $\cvalid$-valid random diagonal matrix for large $\cvalid$. They store the portion of $\mR$ corresponding to rows of $\mA$. Note that we assume that Alice and Bob share randomness. \label{line:twoipm:R} \\

    Alice and Bob set ${\d}_x = \Phi''(x)^{ -\frac{1}{2} }\left( g - \mR \d_r \right).$ 
            \label{line:twoipm:delta_x}\\
    Alice and Bob set ${\d}_s = \mu {\Tau}\Phi''(x)^{ \frac{1}{2} } \d_1$. 
            \label{line:twoipm:delta_s}\\
    Alice and Bob set $x^\tmp = x + {\d}_x$ 
            and $s^\tmp = s + {\d}_s.$ 
            \label{line:twoipm:xnewsnew}\\
}
\end{algorithm2e}
\clearpage

\subsection{Finding an Initial Feasible Point} \label{subsection:initialpoint}

\Cref{algo:pathfollowing} (and thus \Cref{algo:twoplsstep} as well), as mentioned in \Cref{lemma:pathfollowing}, return a final feasible \( \varepsilon \)-centered point \( (x^\target, s^\target, \mu^\target) \) given an initial feasible \( \varepsilon \)-centered point \( (x^\init, s^\init, \mu^\init) \). This final point can then be used, as noted in \Cref{lemma:finalpoint}, along with a Laplacian solver, to find a near-optimal, near-feasible solution to the linear program \eqref{eq:generalLP}. However, obtaining such an initial feasible \( \varepsilon \)-centered point is not straightforward. To address this issue, \cite{brand2021minimumcostflowsmdps} modify the linear program \eqref{eq:generalLP} to create a new linear program with a trivial feasible \( \varepsilon \)-centered point, demonstrating that a solution to the original linear program can be derived from this modified version. We adopt their approach in this section. Throughout this section, we let \( \d' = \frac{\delta}{10m2^{2L}} \) and \( \varepsilon = \frac{1}{4C \log (m/n)} \) for a sufficiently large constant \( C \).

\begin{definition}[Modified LP and its initial point, lemma 8.3 of \cite{brand2021minimumcostflowsmdps}] \label{lemma:modifiedLP}
Given a matrix \( \mA \in \R^{m \times n} \), a vector \( b \in \R^{n} \), a vector \( c \in \R^{m} \), an accuracy parameter \( \d \), and the following linear program:

\[
\min_{\substack{\mA^{\top} x = b \\ \ell \leq x \leq u}} c^{\top} x, 
\]

we define a new matrix \( \wt{\mA} \) as:

\[
\wt{\mA} \defeq \begin{bmatrix} \mA \\ \beta \mI_{n} \end{bmatrix},
\]

where \( \beta = \frac{\|b - \mA^{\top} x^{\init}\|_{\infty}}{\Xi} \) and \( \Xi = \max_{i} |u_{i} - \ell_{i}| \). Here, \( x_{i}^{\init} \) is defined as \( \frac{\ell_{i} + u_{i}}{2} \), and we set \( \wt{x}^{\init} \defeq \frac{1}{\beta}(b - \mA^{\top} x^{\init}) \). By adjusting the signs of the columns in \( \mA \), we can ensure that \( \wt{x}^{\init} \geq 0 \). If any component \( \wt{x}_{i}^{\init} = 0 \), we can eliminate that variable from consideration since it does not contribute to constructing the initial point. For the remaining components, we define \( \wt{\ell}_{i} = -\Xi \) and \( \wt{u}_{i} = 2\wt{x}_{i}^{\init} + \Xi \) (the terms \( -\Xi \) and \( +\Xi \) are included to guarantee \( \wt{u}_{i} > \wt{\ell}_{i} \)). We also set \( \wt{c} \defeq \frac{2\|c\|_{1}}{\d'} \), and we use the same symbol \( \wt{c} \) to denote a vector in \( \R^{n} \) where every entry is equal to this value.

Next, we consider the modified linear program:

\begin{align}\label{eq:modifylp}
\min_{\substack{\wt{\mA}^{\top} \begin{bmatrix} x \\ \wt{x} \end{bmatrix} = b \\ \ell \leq x \leq u \\ \wt{\ell} \leq \wt{x} \leq \wt{u}}} c^{\top} x + \wt{c}^{\top} \wt{x}. 
\end{align}

For this modified linear program \eqref{eq:modifylp}, the point

\[
\left(\begin{bmatrix} x^{\init} \\ \wt{x}^{\init} \end{bmatrix}, \begin{bmatrix} c \\ \wt{c} \end{bmatrix}, \mu\right)
\]

is \( \eps \)-centered with \( \mu = \frac{8m\|c\|_{1}\Xi}{\eps \delta'} \).
\end{definition}

Consequently, \cite{brand2021minimumcostflowsmdps} demonstrate that using a Laplacian solver a solution to the modified linear program \eqref{eq:modifylp} yields a near-feasible near-optimal solution for the original program \eqref{eq:generalLP}:

\begin{lemma}[Final Point of the Original LP, lemma 8.4 of \cite{brand2021minimumcostflowsmdps}]\label{lemma:finalpointmodifylp}
Assuming that the linear program in \eqref{eq:generalLP} has a feasible solution, and given an \( \eps \)-centered point for the modified LP in \eqref{eq:modifylp} with \( \mu = \frac{\d' \|c\|_{1} \Xi}{C n} \) for some sufficiently large constant \( C \) and any \( \d \leq 1 \), using a Laplacian solver for the Laplacian system $(\mA^\top {\Tau}^{-1}\Phi''(x)^{-1}\mA) x = b$, we can produce a point \(x^{\final} \) that satisfies the following conditions:

\[
c^{\top} x^{\final} \leq \min_{\substack{\mA^{\top} x = b \\ \ell \leq x \leq u}} c^{\top} x + \d \quad \text{and} \quad \|\mA^{\top} x^{\final} - b\|_{\infty} \leq \d,
\]

with \( \ell_{i} \leq x_{i}^{\final} \leq u_{i} \) for all \( i \).

\end{lemma}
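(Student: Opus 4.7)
The plan is a two-step reduction: first apply \Cref{lemma:finalpoint} to the modified LP \eqref{eq:modifylp} to obtain an exactly primally feasible, near-optimal pair $(x^{\final}, \tilde{x}^{\final})$, then translate this back to guarantees for the original LP \eqref{eq:generalLP}. Feeding the given $\varepsilon$-centered point at $\mu = \delta'\|c\|_1 \Xi/(Cn)$ into \Cref{lemma:finalpoint}, the single Laplacian solve on $(\mA^\top \Tau^{-1} \Phi''(x)^{-1} \mA)x = b$ called for in the statement is exactly the one invoked inside that lemma; its output satisfies $\mA^\top x^{\final} + \beta \tilde{x}^{\final} = b$, the box constraints $\ell \le x^{\final} \le u$ and $\tilde{\ell} \le \tilde{x}^{\final} \le \tilde{u}$, and the cost bound $c^\top x^{\final} + \tilde{c}^\top \tilde{x}^{\final} \le \mathrm{OPT}_{\mathrm{mod}} + O(n\mu)$. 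Since the original LP is feasible, any optimal $x^*$ gives the modified-LP feasible point $(x^*, 0)$ of cost $\mathrm{OPT}$, so $\mathrm{OPT}_{\mathrm{mod}} \le \mathrm{OPT}$ and
\[
c^\top x^{\final} + \tilde{c}^\top \tilde{x}^{\final} \le \mathrm{OPT} + O(\delta'\|c\|_1 \Xi).
\]

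The feasibility target $\|\mA^\top x^{\final} - b\|_\infty \le \delta$ reduces, via the identity $\mA^\top x^{\final} - b = -\beta \tilde{x}^{\final}$ together with $\beta \le O(m 2^{2L}/\Xi)$ and $\delta' = \delta/(10 m 2^{2L})$, to the coordinatewise bound $\|\tilde{x}^{\final}\|_\infty = O(\delta' \Xi)$. The main obstacle is that the cost inequality controls only the signed sum $\tilde{c}^\top \tilde{x}^{\final}$, while the box constraints $\tilde{\ell}_i = -\Xi$, $\tilde{u}_i = 2\tilde{x}^{\init}_i + \Xi$ a priori allow each $|\tilde{x}^{\final}_i|$ to be of order $\Xi$ and permit negative and positive components to cancel. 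I would close this gap via an exchange argument exploiting the scale separation $\tilde{c}_i = 2\|c\|_1/\delta' \gg \|c\|_\infty$: perturbing $\tilde{x}^{\final}_i$ by magnitude $t$ forces $(\mA^\top x^{\final})_i$ to shift by $\beta t$ to maintain exact feasibility, and by the box constraints on $x$ the resulting shift in $c^\top x^{\final}$ has magnitude at most $O(\|c\|_\infty \Xi)$. Plugging this into the cost inequality and comparing against the slack $O(\delta'\|c\|_1 \Xi)$ yields the coordinatewise bound $|\tilde{x}^{\final}_i| = O(\delta'\Xi)$, and hence $|\tilde{c}^\top \tilde{x}^{\final}| = O(\delta)$.

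Given these two bounds, $\|\mA^\top x^{\final} - b\|_\infty \le \delta$ follows by substituting the coordinatewise bound into $\mA^\top x^{\final} - b = -\beta \tilde{x}^{\final}$; the optimality guarantee $c^\top x^{\final} \le \mathrm{OPT} + \delta$ follows by isolating $c^\top x^{\final}$ in the cost inequality and absorbing $\tilde{c}^\top \tilde{x}^{\final}$; and $\ell \le x^{\final} \le u$ is inherited directly from modified-LP feasibility. The delicate part, and the one I expect to require the most care, is executing the exchange argument rigorously without losing polynomial factors in $m$ or $n$, since any bound that uses only the signed sum $\tilde{c}^\top \tilde{x}^{\final}$ becomes too lossy as soon as coordinates of $\tilde{x}^{\final}$ take opposite signs; the scale separation between $\tilde{c}$ and $c$, together with the particular choice of $\mu$, is what makes the exchange argument give back exactly the $\delta$-scale guarantees.
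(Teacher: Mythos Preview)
The paper does not prove this lemma; it is quoted verbatim from \cite{brand2021minimumcostflowsmdps} and used as a black box, so there is no in-paper argument to compare against. Your high-level plan (apply \Cref{lemma:finalpoint} to the modified LP, then pull the guarantees back to \eqref{eq:generalLP}) is the right one, and the upper bound on the signed sum $\mathbf{1}^\top\tilde{x}^{\final}\le O(\delta'\Xi)$ does follow cleanly from $\mathrm{OPT}_{\mathrm{mod}}\le\mathrm{OPT}$ together with the box bound $|c^\top x^{\final}-\mathrm{OPT}|\le\|c\|_1\Xi$.

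The gap is precisely where you flagged it: the coordinatewise bound $|\tilde{x}^{\final}_i|=O(\delta'\Xi)$. Your exchange argument proposes to move a single $\tilde{x}^{\final}_i$ towards $0$ and compensate by shifting $x$ so that only the $i$-th coordinate of $\mA^\top x$ changes. But nothing guarantees that such an $x'\in[\ell,u]$ exists; the box may simply not contain a direction that moves $(\mA^\top x)_i$ alone by the required amount $\beta t$. The one feasible line you actually know --- the segment towards $(x^*,0)$ --- moves \emph{all} coordinates of $\tilde{x}$ proportionally, which just reproduces the signed-sum bound. In particular, the sketch as written does not rule out the bad scenario where some $\tilde{x}^{\final}_i$ is of order $-\Xi$ and another of order $+\Xi$, cancelling in $\mathbf{1}^\top\tilde{x}^{\final}$ while each contributes $\beta\Xi\gg\delta$ to $\|\mA^\top x^{\final}-b\|_\infty$. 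Equivalently, you are implicitly assuming a lower bound on $\mathrm{OPT}_{\mathrm{mod}}$ of the form $\mathrm{OPT}-O(\|c\|_1\Xi)$, but with $\tilde{\ell}_i=-\Xi$ the modified LP can in principle have optimum as low as $\mathrm{OPT}-O(n\Xi\tilde{c})$.

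To close this you would need an argument that genuinely controls individual coordinates --- for instance, exploiting the $\varepsilon$-centered structure (the centrality condition on the $\tilde{x}$ block couples $\tilde{s}_i$ to $\tilde{\phi}'_i(\tilde{x}_i)$, which can localise $\tilde{x}_i$), or a duality argument showing that the modified LP's dual certifies $\tilde{x}^{\mathrm{opt}}\approx 0$ coordinatewise whenever the original LP is feasible. The cost inequality alone, which is all your exchange step invokes, is not enough.
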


In the following section, we will discuss how the modified linear program \eqref{eq:modifylp} and the \PathFollowing~algorithm are used to derive the results presented in \Cref{theorem:generalLP}.

\subsection{Proof of Correctness and Complexity} \label{subsection:proofIPM}

\begin{proof}[Proof of \Cref{theorem:generalLP}]
The algorithm presented thus far constructs a modified linear program of the type \eqref{eq:modifylp} and solves it using the \PathFollowing~algorithm (\Cref{algo:pathfollowing}, which utilizes \Cref{algo:twoplsstep} for its short-step iterations). This solution is subsequently applied to derive a near-optimal, near-feasible solution for the original linear program \eqref{eq:generalLP}.

As shown in \Cref{lemma:modifiedLP}, we can derive an \(\varepsilon\)-centered initial point \(\left(\begin{bmatrix}x^{\init}\\ \wt x^{\init} \end{bmatrix},\begin{bmatrix}c\\ \wt c \end{bmatrix},\mu^{\init}\right)\) for the modified linear program \eqref{eq:modifylp}, with \(\mu^{\init} = 8m \|c\|_{1} \Xi / \varepsilon \delta'\). To set up this modified LP in a two-party communication environment, one party, say Alice, can include the block matrix \(\begin{bmatrix}\beta \mI_n \end{bmatrix}\) in her part of \(\mA\). Calculating \(\beta\) requires Alice to determine \(\|b - \mA^{\top} x^{\init}\|_{\infty}\), which equals \(b - (\mA\pA)^\top (x\pA)^\init - (\mA\pB)^\top (x\pB)^\init\); Alice and Bob can compute this norm with \(O(nL \log(m))\) bits of communication. Finding \(\Xi\) only requires \(O(L)\) bits, as Alice and Bob simply communicate \(u_i\) and \(l_i\) for the largest \(|u_i - l_i|\). Therefore, calculating \(\beta\) requires a total of \(O(nL \log(m))\) bits. Setting \(\wt{c} = \frac{2 \|c\|_1}{\d'}\) requires \(\|c\|_1\), which can be communicated in \(O(L \log(m))\) bits since each component of \(c\) has bit complexity \(L\). Altogether, constructing the modified LP needs \(O(nL \log(m))\) bits of communication.

We then set \(\mu^{\final} = \frac{\delta' \|c\|_{1} \Xi}{C n}\) for a sufficiently large constant \(C\) and apply the \PathFollowing~algorithm, which yields \(\left(\begin{bmatrix} x^{\target}\\ \wt x^{\target} \end{bmatrix},\begin{bmatrix}s^{\target}\\ \wt s^{\target} \end{bmatrix},\mu^{\target}\right)\). By \Cref{lemma:pathfollowing}, this algorithm requires \(\tO(\sqrt{n} \log(\mu^\init/\mu^\final))\) iterations. As established in the previous section, setting \(\mu^\init = \frac{8m \|c\|_{1} \Xi}{\varepsilon \delta'}\) results in a total of \(\tO(\sqrt{n} L \log m)\) iterations.

Now, we analyze the bit complexity of each round of the \TwoPartyShortStep~algorithm. Since this algorithm operates on the modified LP, we need to account for the properties of the matrix \(\wt{\mA} = \begin{bmatrix} \mA \\ \beta \mI_{n} \end{bmatrix}\). 

First, observe that by adding a diagonal block matrix to \(\mA\), the number of non-zero entries per row remains unchanged. Thus, each row of \(\wt{\mA}\) still contains at most \(k\) non-zero entries.

Next, we examine the condition number \(\kappa(\wt{\mA})\) of the matrix \(\wt{\mA}\). Since \(\wt{\mA}^\top \wt{\mA} = \mA^\top \mA + \beta^2 \mI\), the condition number of \(\wt{\mA}\) is less than or equal to the condition number of \(\mA\):

\[
\kappa(\wt{\mA}) = \sqrt{\frac{\lambda_{\max} (\wt{\mA}^\top \wt{\mA})}{\lambda_{\min} (\wt{\mA}^\top \wt{\mA})}} = \sqrt{\frac{\lambda_{\max}(\mA^\top \mA) + \beta^2}{\lambda_{\min}(\mA^\top \mA) + \beta^2}} \leq \sqrt{\frac{\lambda_{\max}(\mA^\top \mA)}{\lambda_{\min}(\mA^\top \mA)}} = \kappa(\mA),
\]

where \(\lambda_{\max}\) and \(\lambda_{\min}\) denote the largest and smallest eigenvalues, respectively. Thus, adding the \(\beta \mI_n\) block effectively maintains or improves the condition number relative to \(\mA\).

In addition, since we need to calculate and maintain a spectral approximation of the matrix \(\bar{\mA} = {\Tau}^{-\frac{1}{2}}\Phi''(x)^{-\frac{1}{2}}\mA\), we must also bound its condition number.

The condition number \(\kappa(\mD^{\frac{1}{2}} \mA)\) for \(\mD = \Tau \Phi''(x)\) is given by:

\[
\kappa(\mD^{\frac{1}{2}} \mA) = \sqrt{\frac{\lambda_{\max} (\mA^\top \mD \mA)}{\lambda_{\min} (\mA^\top \mD \mA)}} \leq \sqrt{\frac{\max D_i}{\min D_i}} \kappa(\mA),
\]

where \(\mD_i\) represents the entries of the matrix \(\mD\). To ensure this bound, we examine the entries of \(\tau\) and \(\phi''(x)\):

1. The values \(\tau_i\) are regularized \(\ell_p\)-Lewis weights, defined as the solution to \(w(\mA) = \sigma(\mW^{\frac{1}{2} - \frac{1}{p}} \mA) + v\) (see \Cref{def:matrixlewis}). For calculating the leverage scores \(\sigma(\cdot)\), we use the procedure by \cite{ghadiri2024improvingbitcomplexitycommunication} (see \Cref{lemma:levScore}), which constructs leverage scores as powers of two within \([\frac{1}{2 m^2}, 1]\). The regularization vector \(v\) is set to \(\frac{n}{m} \mathbf{1}\).

2. The entries \(\phi''_i(x_i) = 1/(u_i - x_i)^2 +1/(x_i - l_i)^2 \geq 1/(u-l)^2\) provide a lower bound. Furthermore, as \cite{brand2021minimumcostflowsmdps} demonstrate, \(\log \Phi''(x)^{-1} \geq -\tO(L + \log \mu^\final + \log \|c\|_\infty)\).

Combining these properties gives us:

\[
\log \kappa(\bar{\mA}) \leq \tO(L \log m) \log \kappa(\mA).
\]

This ensures that the condition number of \(\bar{\mA}\) is efficiently bounded relative to the original matrix \(\mA\).

Now, we proceed with bounding the communication complexity of the \(\TwoPartyShortStep\) algorithm. In the following, assume \(\kappa = \kappa(\mA)\).

1. Setting the \(\ell_p\)-Lewis Weights: From \Cref{lemma:lewis-weight-communication}, the communication complexity for setting the \(\ell_p\)-Lewis weights in \Cref{line:twoipm:lewWeight} is \(\tO(nkL + n \log \kappa)\) bits.

2. Spectral Approximation and Maintenance: In \Cref{line:twoipm:H}, we compute and maintain a spectral approximation of the matrix \(\bar{\mA} = {\Tau}^{-\frac{1}{2}}\Phi''(x)^{-\frac{1}{2}}\mA\), which then facilitates the computation of \(\mH\). By \Cref{lemma:specApproxlemma}, constructing this spectral approximation requires \(\tO(nkL + nL \log \kappa \log m)\) bits. To maintain this spectral approximation throughout the algorithm, \Cref{lemma:inversemaintenance} implies an additional cost of \(\tO(nkL^2)\) bits across all iterations. 

3. Subprocedure Communications: In \Crefrange{line:twoipm:subp1}{line:twoipm:subp3}, each step requires the transmission of a vector of length \(n\), resulting in \(\tO(nL)\) bits of communication per step.

4. Other Steps: All remaining steps in \(\TwoPartyShortStep\) are either performed by the individual communication parties or the central coordinator, thus not contributing further to the communication cost.

Summing these complexities, the bit complexity of \(\TwoPartyShortStep\) over all iterations is:

\[
\tO(\sqrt{n} L \log m (nkL + nL \log \kappa \log m) + nkL^2).
\]

Therefore, as a component of \(\PathFollowing\), \(\TwoPartyShortStep\) requires a total communication complexity of:

\[
\tO(n^{1.5} L^2 (k + \log \kappa \log m) \log m) \quad \text{bits}.
\]

In the final step of \(\PathFollowing\), we approximately solve a Laplacian system. According to \Cref{lemma:specApproxEquivalency}, this can be achieved by computing a spectral approximation, similar to the spectral approximation step in \Cref{line:twoipm:H}. This approximation requires \(\tO(nkL + n \log \kappa)\) bits of communication.

Thus, the \(\PathFollowing\) algorithm provides \(\left(\begin{bmatrix} x^{\target}\\ \wt x^{\target} \end{bmatrix},\begin{bmatrix}s^{\target}\\ \wt s^{\target} \end{bmatrix}\right)\), a solution to the modified LP in \eqref{eq:modifylp}, with an overall communication complexity of \(\tO(n^{1.5} L^2 (k + \log \kappa \log m) \log m)\) bits. This solution is subsequently utilized by the procedure outlined in \Cref{lemma:finalpointmodifylp}, which applies a Laplacian solver to derive a solution \(x^\target\) for the original LP \eqref{eq:generalLP}. Thanks to \Cref{lemma:specApproxEquivalency}, this step is efficiently executed using a spectral approximation.

\end{proof}

\clearpage
\section{Communication Complexity of Minimum Cost Flow}\label{section:MCMFinTPCModel}

In this chapter, we explore the communication complexity of the minimum cost flow problem. The minimum cost flow problem is defined on a directed graph \( G = (V, E, u, c, d) \), where:

\begin{itemize}
    \item \( V \) is the set of vertices with \( |V| = n \),
    \item \( E \) is the set of directed edges with \( |E| = m \),
    \item \( u \in \R^m_{\geq 0} \) represents the capacities of the edges,
    \item \( c \in \R^m \) represents the cost per unit of flow on each edge,
    \item \( d \in \R^n \) represents the demands of each vertex. 
\end{itemize}

The objective is to find a flow \( f \in \R^m \) that satisfies the following conditions:

\begin{enumerate}
    \item \textbf{Flow Conservation and Demand Satisfaction}: For each vertex \( v \in V \), the net flow (the difference between the total incoming and outgoing flow) must match the vertex's demand \( d_v \). Mathematically:
    \[
    \sum_{e \in E^+(v)} f_e - \sum_{e \in E^-(v)} f_e = d_v,
    \]
    where \( E^+(v) \) and \( E^-(v) \) are the sets of edges entering and leaving vertex \( v \), respectively. Here, \( d_v \) denotes the demand at \( v \): \( d_v > 0 \) indicates that \( v \) requires excess incoming flow, while \( d_v < 0 \) indicates that \( v \) has excess outgoing flow.
    
    \item \textbf{Capacity Constraints}: The flow on each edge \( e \in E \) must satisfy:
    \[
    0 \leq f_e \leq u_e,
    \]
    ensuring that the flow does not exceed the edge’s capacity.
    
    \item \textbf{Objective}: Minimize the total cost of the flow, given by:
    \[
    \sum_{e \in E} c_e f_e = c^{\top} f.
    \]
\end{enumerate}

The solution to the minimum cost flow problem is a feasible flow \( f \) that respects edge capacities, satisfies all vertex demands, and minimizes the total cost.

\paragraph{Minimum Cost Flow in the Two-Party Communication Model}

In the two-party communication model, the problem is solved on the union graph of two communication parties. Each party knows some edges of the graph, along with their respective capacities and costs, while both parties share knowledge of the vertex set \( V \) and their demands \( d \). The goal is to compute a feasible optimal flow \( f \in \R^m \) with minimal communication between the two parties.

The main result of this chapter states the communication complexity of this problem:

\theoremmincostflow*

First, note that this problem can be precisely formulated as a linear program with two-sided constraints:

\begin{align}
    \min_{\substack{\mA^\top f = d \\ 0 \le f_e \le u_e \, \forall e \in E}} c^\top f, \label{eq:mincostFlow}
\end{align}

where \( \mA \in \{-1, 0, 1\}^{m \times n} \) is the incidence matrix of \( G \). Specifically, for each edge \( e = (u, v) \in E \), the entries of \( \mA \) are defined as \( \mA_{e,u} = -1 \) and \( \mA_{e,v} = 1 \). 

Note that each edge known by Alice or Bob corresponds to a row of the matrix \( \mA \). This formulation indicates that we aim to solve \eqref{eq:mincostFlow} in a setting analogous to the general linear program setting of \eqref{eq:generalLP}. To utilize the IPM introduced in the last chapter, we need to bound two key quantities:
\begin{itemize}
    \item \( k \), the maximum number of non-zero entries in \( \mA \),
    \item \( \kappa \), the condition number of \( \mA \).
\end{itemize}

\subsection{Analysis Tools} \label{subsection:analysistools}

First, it is easy to see that \( k = 2 \) for the incidence matrix, since each row has exactly two non-zero entries: one \( 1 \) and one \( -1 \). 

Next, we show that \( \kappa \leq O(\sqrt{n}) \), where \( \kappa \) is the condition number of the incidence matrix \( \mA \).

\begin{lemma}\label{lemma:conditionnumMinCost}
    Let \( \mA \in \{-1, 0, 1\}^{m \times n} \) be the incidence matrix of the \( n \)-vertex, \( m \)-edge, directed graph \( G \). Then \( \kappa(\mA) \leq O(n\sqrt{n}) \).
\end{lemma}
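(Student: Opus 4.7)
The plan is to identify $\mA^\top \mA$ with the combinatorial Laplacian $\mL$ of the underlying undirected graph on $V$: for each directed edge $e=(u,v)$, the corresponding row of $\mA$ is $\unitvec_v - \unitvec_u$, so summing the rank-one outer products $(\unitvec_v - \unitvec_u)(\unitvec_v - \unitvec_u)^\top$ over $e \in E$ yields exactly $\mL = \mD - \mB$, where $\mD$ is the diagonal degree matrix and $\mB$ the adjacency matrix of the underlying undirected graph. Interpreting $\|\mA^{-1}\|_2$ as $1/\sigma_{\min}^{\ne 0}(\mA)$ (the standard Moore--Penrose convention for rectangular $\mA$), we obtain
\begin{align*}
    \kappa(\mA)^2 \;=\; \frac{\lambda_{\max}(\mL)}{\lambda_{\min}^{\ne 0}(\mL)},
\end{align*}
so the task reduces to bounding these two extremal eigenvalues of $\mL$.

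For the numerator, I would apply Gershgorin's circle theorem to $\mL$ (equivalently use $\|\mL\|_\infty \le 2\max_v \deg(v)$): every row has diagonal entry $\deg(v)$ and off-diagonal absolute row-sum equal to $\deg(v)$, hence $\lambda_{\max}(\mL) \le 2\max_v \deg(v) \le 2(n-1) = O(n)$. This step is elementary and does not drive the bound.

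The main obstacle is the lower bound on $\lambda_{\min}^{\ne 0}(\mL)$. If $G$ is connected, this is the algebraic connectivity $\lambda_2(\mL)$, and I would invoke Mohar's classical inequality $\lambda_2(\mL) \ge 4/(n \cdot \diam(G))$ together with the trivial bound $\diam(G) \le n-1$ to conclude $\lambda_2(\mL) \ge \Omega(1/n^2)$. If $G$ is disconnected, the smallest non-zero eigenvalue of $\mL$ is the minimum of the algebraic connectivities across its connected components, each of which still admits the same $\Omega(1/n^2)$ bound. As a self-contained alternative to citing Mohar, Cheeger's inequality $\lambda_2(\mL) \ge h(G)^2/(2\max_v \deg(v))$ together with the trivial estimates $h(G) \ge 1/n$ and $\max_v \deg(v) \le n-1$ yields the same $\Omega(1/n^2)$ lower bound.

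Combining the two estimates gives $\kappa(\mA)^2 \le O(n) \cdot O(n^2) = O(n^3)$, hence $\kappa(\mA) \le O(n^{3/2}) = O(n\sqrt{n})$, as claimed. The only substantive ingredient is the universal $\Omega(1/n^2)$ lower bound on the algebraic connectivity; everything else is a routine application of standard matrix-norm inequalities.
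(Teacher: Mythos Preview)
Your proof is correct and follows essentially the same approach as the paper: identify $\mA^\top\mA$ with the combinatorial Laplacian, bound $\lambda_{\max}=O(n)$ (the paper uses the $d(u)+d(v)$ bound rather than Gershgorin, but to the same effect), and invoke the Mohar-type inequality $\lambda_2 \ge \Omega(1/(n\cdot\diam))\ge \Omega(1/n^2)$ for the algebraic connectivity, handling the disconnected case componentwise exactly as you do. One small caveat: your Cheeger alternative as stated only gives $\lambda_2 \ge h(G)^2/(2d_{\max}) \ge \Omega(1/n^3)$, not $\Omega(1/n^2)$, so that side route would yield $\kappa(\mA)=O(n^2)$ rather than $O(n\sqrt{n})$; the Mohar argument is the one that actually delivers the claimed bound.
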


\begin{proof}[Proof of \cref{lemma:conditionnumMinCost}]
    First, note that by definition, we have:
    \[
        \kappa(\mA) = \frac{\sigma_{\max}(\mA)}{\sigma_{\min}(\mA)},
    \]
    where \( \sigma_{\max} \) and \( \sigma_{\min} \) denote the largest and smallest singular values of \( \mA \), respectively. Furthermore, let \( \lambda_{\max} \) and \( \lambda_{\min} \) be the largest and smallest non-zero eigenvalues of the matrix \( \mL \coloneqq \mA^\top \mA \). It follows that:
    \[
        \sigma_{\max}(\mA) = \sqrt{\lambda_{\max}} \quad \text{and} \quad \sigma_{\min}(\mA) = \sqrt{\lambda_{\min}}.
    \]

    Since \( \mA \) is the incidence matrix of \( G \), the matrix \( \mL \) is the Laplacian matrix of the undirected graph \( G' \), obtained by ignoring edge directions and weights in \( G \). Therefore, we need to bound the largest and smallest non-zero eigenvalues of the Laplacian matrix of \( G' \).

    The smallest non-zero eigenvalue of \( \mL \), \( \lambda_{\min} \), is the algebraic connectivity of \( G' \). It is known that for a simple connected graph with \( n' \) vertices and diameter \( D \):
    \[
        \lambda_{\min} \geq \frac{1}{n'D},
    \]
    as stated in \cite{lowerBoundKappa}, Theorem 4.2. Since this holds for each connected component of \( G' \), we conclude:
    \[
        \lambda_{\min} \geq \frac{1}{n^2}.
    \]

    On the other hand, it is known (see \cite{upperBoundKappa}, Theorem 2) that:
    \[
        \lambda_{\max} \leq \max\{d(u) + d(v) \mid (u, v) \in E(G')\} \leq 2n,
    \]
    where \( d(u) \) is the degree of vertex \( u \) in \( G' \).

    Thus, the condition number satisfies:
    \[
        \kappa(\mA) = \sqrt{\frac{\lambda_{\max}}{\lambda_{\min}}} \leq O(n\sqrt{n}).
    \]
\end{proof}

Lastly, since our goal is to obtain an exactly optimal and feasible flow, rather than a near-optimal, near-feasible one, we introduce the following lemma from \cite{brand2021bipartitematchingnearlylineartime}:

\begin{lemma}[Lemma 8.10 of \cite{brand2021bipartitematchingnearlylineartime}]\label{lemma:isolation}
    Let \( \Pi = (G, d, c) \) be an instance of the minimum-cost flow problem, where \( G \) is a directed graph with \( m \) edges, the demand vector \( d \in \{-W, \dots, W\}^V \), the cost vector \( c \in \{-W, \dots, W\}^E \), and the capacity vector \( u \in \{0, \dots, W\}^E \).
    
    Let the perturbed instance \( \Pi' = (G, d, c') \) be such that \( c'_e = c_e + z_e \), where \( z_e \) is a random number from the set 
    \[
    \left\{ \frac{1}{4m^2 W^2}, \dots, \frac{2mW}{4m^2 W^2} \right\}.
    \]
    Let \( x' \) be a feasible flow for \( \Pi' \) whose cost is at most \( \mathrm{OPT}(\Pi') + \frac{1}{12m^2 W^3} \), where \( \mathrm{OPT}(\Pi') \) is the optimal cost for problem \( \Pi' \). Then, with probability at least \( 1/2 \), there exists an optimal feasible and integral flow \( x \) for \( \Pi \) such that \( \| x - x' \|_\infty \leq \frac{1}{3} \).
\end{lemma}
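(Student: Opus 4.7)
The plan is to prove this via a perturbation and isolation argument in the style of Mulmuley--Vazirani--Vazirani, combined with a convex-combination argument that exploits the integrality of the min-cost flow polytope (total unimodularity of the incidence matrix $\mA$). The strategy has three layers: the perturbation $z^\top y$ is small enough that any integer $\Pi'$-optimum is also a $\Pi$-optimum; an edge-by-edge isolation argument shows that with probability at least $1/2$ the $\Pi'$-optimal integer flow is unique; and near-optimality of $x'$ then forces $x'$ to be $1/3$-close in $\ell_\infty$ to this unique integer optimum.

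First, I would establish the easy bound $0 \le z^\top y \le 1/2$ for any feasible integer flow $y$: each $z_e \le 1/(2mW)$, each $y_e \le W$, and there are $m$ edges. From this, if $x^*$ is any integer flow minimizing $c'^\top x$ over the feasible integer set $X$, then $x^*$ is also $\Pi$-optimal, since for any $\Pi$-suboptimal integer $y$ we have $c^\top y - c^\top x^* \ge 1$ (integer costs), and hence $c'^\top y - c'^\top x^* \ge 1 - 1/2 = 1/2 > 0$. The same computation shows $c'^\top y - c'^\top x^* \ge 1/2$ for every $\Pi$-suboptimal integer $y$.

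The isolation step is the main obstacle. Let $X^* := \argmin_{x \in X} c^\top x$ denote the set of $\Pi$-optimal integer flows. Fix an edge $e$ and vary only $z_e$ while keeping the other $z_{e'}$ fixed. Since $c^\top x$ is constant on $X^*$, the restricted value $\min\{c'^\top x : x \in X^*,\, x_e = k\}$ is affine in $z_e$ with slope $k$ for each $k \in \{0, 1, \dots, W\}$. Hence $\min_{x \in X^*} c'^\top x$, as a function of $z_e$, is the lower envelope of at most $W+1$ affine functions of distinct slopes, and so has at most $W$ breakpoints. The value of $x^*_e$ in a $c'$-minimizer over $X^*$ is uniquely determined except at these breakpoints; since $z_e$ is uniform over $2mW$ equispaced multiples of $1/(4m^2W^2)$, the probability it coincides with a breakpoint is at most $W/(2mW) = 1/(2m)$. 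A union bound over the $m$ edges then gives probability at most $1/2$ that any edge is undetermined, and on the complementary event the minimizer $x^*$ is uniquely pinned down coordinate by coordinate, hence globally. The key subtlety is that the naive pair-of-flows union bound would only yield $\Theta(W)$ failure probability; the correct setup is per-edge, exploiting the convex piecewise-linear structure of $\min_k M_{e,k}(z_e)$.

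To conclude, I would decompose $x'$ as a convex combination $x' = \sum_i \lambda_i x^i$ of extreme points of the feasible polytope $P = \{x : \mA^\top x = d,\, 0 \le x \le u\}$, which are all integer by total unimodularity of $\mA$. The assumption on $x'$ then reads $\sum_i \lambda_i (c'^\top x^i - c'^\top x^*) \le 1/(12m^2W^3)$. For any $x^i \neq x^*$, the gap $c'^\top x^i - c'^\top x^*$ is either at least $1/2$ (if $x^i$ is $\Pi$-suboptimal) or of the form $z^\top(x^i - x^*) = (\text{nonzero integer})/(4m^2W^2) > 0$ (if $x^i \in X^*$, using the uniqueness of $x^*$); in either case it is at least $1/(4m^2W^2)$. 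Therefore $\sum_{i \neq x^*} \lambda_i \le (1/(12m^2W^3))/(1/(4m^2W^2)) = 1/(3W)$, and $\|x' - x^*\|_\infty \le \sum_{i \neq x^*} \lambda_i \|x^i - x^*\|_\infty \le W \cdot 1/(3W) = 1/3$. Taking $x := x^*$ completes the proof.
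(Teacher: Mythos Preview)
The paper does not prove this lemma at all: it is quoted verbatim as Lemma 8.10 of \cite{brand2021bipartitematchingnearlylineartime} and used as a black box in the proof of \Cref{theorem:mincostflow}. So there is no ``paper's own proof'' to compare against.

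That said, your argument is correct and is essentially the standard proof one expects. The three layers are sound. In step~1 the bound $0 \le z^\top y \le m\cdot W \cdot \tfrac{1}{2mW}=\tfrac12$ combined with integrality of $c^\top y$ on integer flows gives both that every integer $c'$-minimizer lies in $X^*$ and that $\Pi$-suboptimal integers have $c'$-gap $\ge 1/2$. In step~2 the per-edge lower-envelope argument is the right generalization of Mulmuley--Vazirani--Vazirani from $\{0,1\}$-vectors to $\{0,\dots,W\}$-valued flows: for fixed $z_{-e}$ the functions $z_e\mapsto M_{e,k}(z_e)$ are lines of distinct integer slopes $k\in\{0,\dots,W\}$, so the concave lower envelope has at most $W$ breakpoints, whence at most $W$ of the $2mW$ grid values can be bad, and the union bound over edges gives failure probability $\le 1/2$. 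Step~3 is clean: total unimodularity of the incidence matrix makes every vertex of $P$ integral, so $\OPT(\Pi')=c'^\top x^*$; the gap $c'^\top x^i-c'^\top x^*$ is a positive integer multiple of $1/(4m^2W^2)$ whenever $x^i\in X^*\setminus\{x^*\}$ (uniqueness is used exactly here), and $\ge 1/2$ otherwise, which forces $\sum_{x^i\neq x^*}\lambda_i\le 1/(3W)$ and hence $\|x'-x^*\|_\infty\le 1/3$.
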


We are now ready to prove \Cref{theorem:mincostflow}. The following proof is inspired by Theorem 1.4 of \cite{brand2021minimumcostflowsmdps}. While we modify certain steps, the proof remains largely similar to that of Theorem 1.4 in \cite{brand2021minimumcostflowsmdps}.

\subsection{Proof of Correctness and Complexity} \label{subsection:mincostproof}

\begin{proof}[Proof of \cref{theorem:mincostflow}]
    Given the input graph \( G = (V, E, u, c) \), we first perturb the edge costs \( c \) to \( c' \) according to \cref{lemma:isolation}, obtaining the perturbed graph \( G' = (V, E, u, c') \). This step does not require any communication, as we assume both parties share randomness.

    In the following, and consistent with \cref{lemma:isolation}, we assume that \( W \) bounds the entries of \( u \), \( c \), and \( d \).

    Next, we consider the linear program \eqref{eq:mincostFlow}, which describes this problem. Similar to the algorithm described in the proof of \cref{theorem:generalLP} (see \cref{subsection:proofIPM}), we set 
    \[
    \mu^\init = 100 m^2 W^3 \varepsilon^{-1}, \quad \mu^\target = \frac{1}{\mathrm{poly}(mW)},
    \]
    modify the LP to create a new LP for a graph \( \wt{G}' \) with a trivial initial solution, and use the \PathFollowing~algorithm to compute a near-optimal, near-feasible solution 
    \[
    \begin{bmatrix} f'^\apxfinal \\ \wt{f}'^\apxfinal \end{bmatrix}
    \]
    for \eqref{eq:mincostFlow}. As proven by \cite{brand2021minimumcostflowsmdps} (see Lemma 7.7, \cite{brand2021minimumcostflowsmdps}), this solution's entries differ by at most \( 1/(mW)^{10} \) from an exact feasible flow 
    \[
    \begin{bmatrix} f'^\final \\ \wt{f}'^\final \end{bmatrix}
    \]
    for \( \wt{G}' \), which we would obtain if an exact Laplacian solver were used in the final step of \PathFollowing. They also prove that 
    \[
    c^\top f'^\final + \wt{c}^\top \wt{f}'^\final \leq \opt(\wt{G}') + \frac{1}{12 m^2 W^3},
    \]
    where \( \opt(G) \) denotes the optimal value of LP \eqref{eq:mincostFlow} for graph \( G \). Note that \( \wt{G}' \) is effectively a graph constructed from \( G' \) by adding a bi-directional star rooted at a new vertex to \( G' \).

    The results of \cite{brand2021minimumcostflowsmdps} establish that if LP \eqref{eq:mincostFlow} has a feasible solution, then the auxiliary flow component satisfies \( \|\wt{f}'^\final\|_\infty < 0.1 \). Consequently, if \( \|\wt{f}'^\final\|_\infty \geq 0.1 \), we can confidently conclude that the chosen demand vector \( d \) renders the problem infeasible for \( G \). 

    If \( \|\wt{f}'^\final\|_\infty < 0.1 \), we proceed by rounding the entries of the approximate solution 
    \[
    \begin{bmatrix} f'^\apxfinal \\ \wt{f}'^\apxfinal \end{bmatrix}
    \]
    to the nearest integers to obtain 
    \[
    \begin{bmatrix} f^\final \\ \wt{f}^\final \end{bmatrix}.
    \]
    At this stage, \cref{lemma:isolation} guarantees that, with probability at least \( 1/2 \), there exists an integral optimal solution 
    \[
    \begin{bmatrix} f^\opt \\ \wt{f}^\opt \end{bmatrix}
    \]
    to the problem for the perturbed graph \( \wt{G}' \), and this solution differs entry-wise from 
    \[
    \begin{bmatrix} f'^\final \\ \wt{f}'^\final \end{bmatrix}
    \]
    by no more than \( 1/3 \). Since \( \begin{bmatrix} f^\opt \\ \wt{f}^\opt \end{bmatrix} \) is guaranteed to differ from 
    \[
    \begin{bmatrix} f'^\apxfinal \\ \wt{f}'^\apxfinal \end{bmatrix}
    \]
    entry-wise by less than \( 1/2 \), 
    \[
    \begin{bmatrix} f^\final \\ \wt{f}^\final \end{bmatrix} = \begin{bmatrix} f^\opt \\ \wt{f}^\opt \end{bmatrix}
    \]
    with probability at least \( 1/2 \). 

    Moreover, because \( \|\wt{f}'^\final\|_\infty < 0.1 \), it follows that \( \wt{f}^\final = 0 \). This implies that \( f^\final \) is a feasible flow for the original graph \( G \), with a total cost given by 
    \[
    c^\top f^\final = c^\top f^\final + \wt{c}^\top \wt{f}^\final = \opt(\wt{G}'),
    \]
    where \( \opt(\wt{G}') \leq \opt(G) \). The inequality holds because adding edges to a graph cannot increase the optimal value. Therefore, \( f^\final \) is indeed the optimal feasible flow for \( G \).

    Finally, to ensure the algorithm succeeds with high probability, we can repeat it \( O(\log n) \) times, thereby amplifying the success probability. 

    Regarding communication complexity, the analysis follows that of the \cref{theorem:generalLP}. Note that the entries of \( u \), \( c \), and \( d \) are bounded by \( W = 10n \|u\|_\infty \|c\|_\infty \), so \( L \in \tO(\log(\|u\|_\infty \|c\|_\infty)) \). Additionally \( m \in O(n^2) \).

    Constructing the modified LP requires \( O(nL \log m) = \tO(n \log(\|u\|_\infty \|c\|_\infty)) \) bits of communication. Executing the \PathFollowing~algorithm involves \( \tO(\sqrt{n} \log(\mu^\init / \mu^\final)) = \tO(\sqrt{n} \log(\poly mW)) = \tO(\sqrt{n} \log(\|u\|_\infty \|c\|_\infty)) \) iterations, with each iteration requiring \( \tO(nkL + nL \log \kappa \log m) = \tO(n \log(\|u\|_\infty \|c\|_\infty)) \) bits of communication. Consequently, the total bit complexity for \PathFollowing~is \( \tO(n^{1.5} \log^2(\|u\|_\infty \|c\|_\infty)) \) bits.

    Additionally, solving a Laplacian system, as per \cref{lemma:specApproxlemma}, demands \( \tO(nkL + n \log \kappa) = \tO(n \log(\|u\|_\infty \|c\|_\infty)) \) bits of communication.

    In summary, the overall bit complexity for the minimum-cost flow problem amounts to \( \tO(n^{1.5} \log^2(\|u\|_\infty \|c\|_\infty)) \) bits of communication.
      
\end{proof}

\subsection{Communication Complexity of Maximum Flow} \label{subsection:maxflow}

In this section, we discuss the communication complexity of the maximum flow problem. The maximum flow problem is defined on a directed graph \( G = (V, E, u) \), where:
\begin{itemize}
    \item \( V \) is the set of vertices with \( |V| = n \),
    \item \( E \) is the set of directed edges with \( |E| = m \),
    \item \( u \in \R^m_{\geq 0} \) represents the capacities of the edges.
\end{itemize}

The objective is, for given vertices \( s \) (source) and \( t \) (sink), to find a flow \( f \in \R^m \) that satisfies the following conditions:

\begin{enumerate}
    \item \textbf{Flow Conservation}: For each vertex \( v \in V \setminus \{s, t\} \), the net flow (difference between incoming and outgoing flows) must be zero. Mathematically:
    \[
    \sum_{e \in E^+(v)} f_e - \sum_{e \in E^-(v)} f_e = 0,
    \]
    where \( E^+(v) \) and \( E^-(v) \) are the sets of edges entering and leaving vertex \( v \), respectively.

    \item \textbf{Capacity Constraints}: For each edge \( e \in E \), the flow must satisfy:
    \[
    0 \leq f_e \leq u_e,
    \]
    ensuring that the flow does not exceed the edge's capacity.

    \item \textbf{Maximization of Total Flow}: The total flow \( F \) from \( s \) to \( t \), defined as:
    \[
    F = \sum_{e \in E^-(s)} f_e - \sum_{e \in E^+(s)} f_e = \sum_{e \in E^+(t)} f_e - \sum_{e \in E^-(t)} f_e,
    \]
    is maximized.
\end{enumerate}

The solution to the maximum flow problem is a flow \( f \) that respects edge capacities, satisfies flow conservation, and maximizes \( F \), the total flow from \( s \) to \( t \).

Similar to minimum cost flow problem, in the two-party communication model, maximum flow problem is solved on the union graph of two communication parties. Each party knows some edges of the graph, along with their respective capacities. Both parties know the vertex set $V$. The goal is to find the maximum feasible flow $f \in \R^m$ with minimal communication between the two parties. 

The main result of this section is the following:

\theoremmaxflow*

First, note that the maximum flow problem can be reduced to the minimum cost flow problem. Modify \( G \) to \( G' \) by adding an \( s \)-\( t \) edge \( e' = (s, t) \) with \( u_{e'} = F \) for a sufficiently large \( F \), e.g., \( F = \sum_{e \in E(G)} u_e \), and \( c_{e'} = 1 \). The costs of the remaining edges are set to \( 0 \). Additionally, set \( d_s = F \), \( d_t = -F \), and \( d_v = 0 \) for the remaining vertices. By computing the minimum-cost flow in \( G' \), we compute the maximum flow in \( G \), as the minimum cost flow in \( G' \) would avoid sending flow through \( e' \) as much as possible. Specifically, for an optimal minimum cost flow \( f' \in \mathbb{Z}^{m+1} \) of \( G' \), the flow \( f \in \mathbb{Z}^m \) with \( f_e = f'_e \) for all \( e \in E(G) \) is a maximum flow of \( G \) with flow value \( F - f'_{e'} \).

Considering \cref{theorem:mincostflow}, we obtain an algorithm for the maximum flow problem with a bit complexity of \( \tO(n^{1.5} \log^2 \|u\|_\infty) \). However, as \cite{brand2021minimumcostflowsmdps} demonstrate, using a standard scaling technique (Section 6 of \cite{Ahuja1991DistanceDirectedAP}, Chapter 2.6 of \cite{Williamson_2019}), it is possible to reduce this bit complexity to \( \tO(n^{1.5} \log \|u\|_\infty) \). The main idea is to use the algorithm for minimum-cost flow in \( \log \|u\|_\infty \) iterations, where in each iteration the edge capacities are small.

\begin{proof}[Proof of \cref{theorem:maxflow}]

    First, we introduce the algorithm presented by \cite{brand2021minimumcostflowsmdps}.
    
    For any graph \( G \) and a flow \( f \) within \( G \), define \( G_f \) as the residual graph of \( G \) relative to the flow \( f \). Additionally, let \( G_f(\Delta) \) represent the graph derived from \( G_f \) by excluding all edges whose capacities in \( G_f \) are less than \( \Delta \). With these definitions in place, consider the algorithm below:

    \begin{algorithm2e}[h]
    \caption{Algorithm for computing the maximum flow $f \in \Z^m$ of a directed graph $G = (V, E)$ with edge capacities $u \in \R^{m} $ \label{algo:maxflow}}
    \SetKwInOut{Input}{Input}
    \SetKwInOut{Output}{Output}
    \Input{Graph $G = (V,E)$ with edge capacities $u \in \R^{m}$}
    \Output{A flow $f \in \R^m$ of graph $G$ with maximum flow value}
    
    \SetKwProg{Procedure}{Procedure}{}{}
        \Procedure{\textsc{MaxFlow}$(G, u)$}{
            Set $f = 0$ and $\Delta = 2^{\left\lfloor \log_{2}\|u\|_{\infty}\right\rfloor }$. \\

            \While {$\Delta \geq 1$} {
                Set $G' = G_f(\Delta)$. \\
            
                In $G'$, for each $e \in E(G')$, set the edge capacity $u'_e = \left\lfloor \min(u_e, 2m\Delta) / \Delta \right\rfloor$. \label{line:maxflow:edgecap} \\

                As explained above, find the maximum flow $f'$ of $G'$ (extended to $m$-dimensions). \\

                Set $f = f + \Delta \cdot f'$ and $\Delta = \Delta / 2$. \\
            }
            \Return~ $f$.
        }
    \end{algorithm2e}

    This algorithm correctly computes the maximum flow because, after the final iteration when \( \Delta = 1 \), no augmenting path remains in \( G_f \). Furthermore, at the start of each iteration in the while loop of \Cref{algo:maxflow}, the maximum flow value in \( G_f(\Delta) \) is at most \( 2m\Delta \). 

    This is obvious in the first iteration. For subsequent iterations, this holds because the previous iteration ensures that the maximum flow value in \( G_f(2\Delta) \) is zero. Since \( G_f \) can be constructed from \( G_f(2\Delta) \) by adding at most \( m \) edges, each with a capacity less than \( 2\Delta \), the maximum flow value in \( G_f \) is at most \( 2m\Delta \). 

    As a result, in each iteration in \Cref{line:maxflow:edgecap}, the edge capacities can be safely capped at \( 2m\Delta \) without affecting the maximum flow. These capacities are then scaled down to positive integers less than \( 2m \). Since the subsequent steps do not involve communication, the bit complexity of each iteration is \( \tO(n^{1.5} \log m) = \tO(n^{1.5}) \). Given that there are \( \tO(\log \|u\|_\infty) \) iterations, the total bit complexity amounts to \( \tO(n^{1.5} \log \|u\|_\infty) \) bits.
    
\end{proof}

\clearpage

\section{Acknowledgments}
This paper is based on the bachelor’s thesis ``Exploring the Maximum Flow Problem in Non-Sequential Settings'' by Hossein Gholizadeh, written in December 2024 as an external thesis at the Max Planck Institute for Informatics, Saarbrücken, and the Karlsruhe Institute of Technology. We thank %
Jan van den Brand for suggesting the main idea of this work, and Marvin Künnemann and Danupon Nanongkai for their feedback and evaluation.

\begin{refcontext}[sorting=nyt]
\printbibliography[heading=bibintoc]
\end{refcontext}

\end{document}